\newtheorem{thm}{Theorem}
\newtheorem{cor}[thm]{Corollary}
\newtheorem{lem}[thm]{Lemma}
\newtheorem{prop}[thm]{Proposition}
\theoremstyle{definition}
\newtheorem{defn}[thm]{Definition}
\theoremstyle{remark}
\newtheorem{rem}[thm]{Remark}
\newcommand{\mb}{\mathbf}
\newcommand{\bsigma}{{\bf \Sigma}}
\newcommand{\half}{\frac{1}{2}}
\DeclareMathOperator*{\rmd}{d}
\DeclareMathOperator*{\rmds}{d^{\star}}
\begin{document}

\title{Smoothing Brascamp-Lieb Inequalities and Strong Converses for Common Randomness Generation\thanks{This work was supported in part by  NSF Grants CCF-1528132,
CCF-0939370 (Center for Science of Information),
CCF-1116013,
CCF-1319299,
CCF-1319304,
CCF-1350595 and AFOSR FA9550-15-1-0180.}%
}

\author{
\IEEEauthorblockN{Jingbo Liu\IEEEauthorrefmark{1}, Thomas~A.~Courtade\IEEEauthorrefmark{2}, Paul Cuff\IEEEauthorrefmark{1} and Sergio Verd\'{u}\IEEEauthorrefmark{1}}
\IEEEauthorblockA{ \IEEEauthorrefmark{1}Department of Electrical Engineering, Princeton University\\
 \IEEEauthorrefmark{2}Department of Electrical Engineering and Computer Sciences, University of California, Berkeley \\
 Email:  \{{jingbo,cuff,verdu}\}@princeton.edu, courtade@eecs.berkeley.edu
           \vspace{-2ex}   }
}
\maketitle

\begin{abstract}
We study the infimum of the best constant in a functional inequality, the Brascamp-Lieb-like inequality, over auxiliary measures within a neighborhood of a product distribution.
In the finite alphabet and the Gaussian cases, such an infimum converges to the best constant in a mutual information inequality.
Implications for strong converse properties of two common randomness (CR) generation problems are discussed.
In particular, we prove the strong converse property of the rate region for the omniscient helper CR generation problem in the discrete and the Gaussian cases.
The latter case is perhaps the first instance of a strong converse for a continuous source when the rate region involves auxiliary random variables.
\end{abstract}

\section{Introduction}
In the last few years, information theory has seen vibrant developments
in the study of the non-vanishing error probability regime,
and in particular, the successes in applying normal approximations
to gauge the back-off from the asymptotic limits as a function of delay.
Extending the achievements for point-to-point communication systems in \cite{hayashi2009information}\cite{polyanskiy2010channel}\cite{kostina2012fixed}
to network information theory problems
usually requires new ideas for proving tight non-asymptotic bounds. For achievability, single-shot covering lemmas and packing lemmas \cite{verdu2012non}\cite{liu_marton} supply convenient tools for distilling single-shot achievability bounds from the classical asymptotic achievability proofs.
These single-shot bounds are easy to analyze in the stationary memoryless case by choosing the auxiliary random variables to be i.i.d.~and applying the law of large numbers or the central limit theorem.

In contrast, there are few examples of single-shot converse bounds in the network setting.  Indeed, unlike their achievability counterparts, single-shot converses are often non-trivial to single-letterize to a strong converse.
In fact, there are few methods for obtaining strong converses for network information theory problems whose single-letter solutions involve auxiliaries; see e.g.~\cite[Section~9.2 ``Open problems and challenges ahead'']{CIT-086}.
Exceptions include the strong converses for select source networks \cite{csiszar2011information} where the method of types plays a pivotal role.

In this paper, through the example of a common randomness (CR) generation problem \cite[Theorem~4.2]{ahlswede1998common}, we demonstrate the power
of
a functional inequality, the \emph{Generalized Brascamp-Lieb-like (GBLL) inequality} \cite{lccv2015}:
\begin{align}
\int\exp\left(\sum_{j=1}^m\mathbb{E}[\log f_j(Y_j)|X=\cdot]-d\right){\rm d}\mu
&\le \prod_{j=1}^m\|f_j\|_{\frac{1}{c_j}},
\label{e_func}
\end{align}
in proving single-shot converses for problems involving multiple sources.
Here $\mu$, $(Q_{Y_j|X})$, $(\nu_j)$, $(c_j)$, $d$ are given and
$\|f_j\|_{\frac{1}{c_j}}:=\left(\int f_j^{1/c_j}{\rm d}\nu_j\right)^{c_j}$.
The key tool for single-letterizing such single-shot converses to strong converses is the ``achievability'' of the following problem: infimize the best constant $d$ in \eqref{e_func}
with the substitutions $\mu\leftarrow \mu_n$, $\nu_j\leftarrow\nu_j^{\otimes n}$ and $Q_{Y_j|X}\leftarrow Q_{Y_j|X}^{\otimes n}$,
where
the auxiliary measure $\mu_n$ is within a neighborhood (say in total variation) of $\mu^{\otimes n}$.
Interestingly, a product $\mu_n$ is generally not a good choice.
On the surface, this is reminiscent of the smooth R\'{e}nyi entropy \cite{renner2005simple}, who showed that the infimum (resp.~supremum) of the R\'{e}nyi entropy of order $\alpha<1$ (resp.~$\alpha>1$) of an auxiliary measure with a neighborhood of a product distribution behaves like the Shannon entropy. In reality, the smooth version of GBLL appears to be a much deeper problem, since structure at a finer resolution than weak typicality is involved.

The general philosophy appears to be that under certain regularity conditions,
$\frac{d}{n}$ (where $d$ is the best constant in the setting of product measures and smoothing above) converges to the best constant in a mutual information inequality.
We provide a general approach for verifying this principle, and apply it to the discrete memoryless and the Gaussian source.
When this principle holds, our single-shot converse proves the strong converse for the CR generation problem.

The proposed approach to strong converses has two main advantages compared with the method of types approach in \cite{csiszar2011information},
which are nicely illustrated by the example of CR generation:
1) The argument covers possibly stochastic decoders.
2) As illustrated by the Gaussian example, the approach is applicable to some non-discrete sources where the method of types is futile.
This is perhaps the first instance of a strong converse for a continuous source when the rate region involves auxiliaries.
We also refine the analysis to bound the second order rate.

In addition, we discuss the ``converse'' part of smooth BLL, which
generally follows from the achievability of CR generation problems. In fact, smooth BLL and CR generation may be considered as dual problems where the achievability of one implies the converse of the other, and vice versa.\footnote{Another example of such ``dual problems'' is channel resolvability and identification coding \cite{han1993approximation}.} 

It is also interesting to note that for hypercontractivity, which is a special case of the BLL inequality with the best constant being zero, Anantharam et~al.~\cite{anan_13} showed the equivalence between a relative entropy inequality and a mutual information inequality. This equivalence is lost for positive best constants. Thus smooth BLL is a conceptually satisfying way to regain the connection between these two inequalities.

Omitted proofs are given in the appendices of \cite{lccv_smooth2016}.

\section{Preliminaries}
\begin{defn}\label{defn1}
Given a nonnegative $\mu$ on $\mathcal{X}$, $\nu_j$ on $\mathcal{Y}_j$, and random transformations $Q_{Y_j|X}$, and $c_j\in(0,\infty)$, $j\in\{1,\dots,m\}$, define
\begin{align}
\rmd(\mu,(Q_{Y_j|X}),(\nu_j),c^m)
:=\sup\left\{\sum_{l=1}^m c_l D(P_{Y_l}\|\nu_j)-D(P_X\|\mu)\right\}
\nonumber
\end{align}
where the sup is over $P_X\ll \mu$ and $P_X\to Q_{Y_j|X}\to P_{Y_j}$.
\end{defn}
We shall abbreviate the notation in Definition~\ref{defn1} as $\rmd(\mu,\nu_j,c^m)$ when there is no confusion.

Note that $\mu$ and $\nu_j$ are not necessarily probability measures, and $\mu\to Q_{Y_j|X}\to \nu_j$ need not hold. These liberties are useful, e.g. in the proof of Theorem~\ref{thm_gauss}.
Generalizing an approach in \cite{carlen2009subadditivity}, we established the following \cite{lccv2015}:
\begin{prop}\label{prop_func}
Under the assumptions of Definition~\ref{defn1}, $\rmd(\cdot)$ is the minimum $d$ such that
\eqref{e_func} holds
for all nonnegative measurable functions $f_j$.
\end{prop}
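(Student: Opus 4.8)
The plan is to show that the set of constants $d$ for which \eqref{e_func} holds for all (admissible) nonnegative $f_j$ is exactly the half‑line $[\rmd(\cdot),\infty)$, so that $\rmd(\cdot)$ is its minimum; for $\rmd(\cdot)=+\infty$ this reads as "no finite $d$ works''. The single tool for both inclusions is the Gibbs (Donsker--Varadhan) variational formula
\[
\log\int e^{g}\,\rmd\mu=\sup\Bigl\{\textstyle\int g\,\rmd P-D(P\|\mu)\Bigr\},
\]
the supremum over probability measures $P\ll\mu$, which I would first record for a general nonnegative $\sigma$‑finite $\mu$ (it follows from the probability‑measure case by normalization, with the usual convention when no admissible $P$ exists). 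I apply this formula once to $\mu$ and once to each $\nu_j$; this is the ``generalization of the approach of \cite{carlen2009subadditivity}'' alluded to above.

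\emph{Sufficiency: if $d\ge\rmd(\cdot)$ then \eqref{e_func} holds.} Write $g(x):=\sum_{j}\mathbb{E}[\log f_j(Y_j)\mid X=x]-d$ for the integrand on the left of \eqref{e_func}. By the variational formula applied to $\mu$, it suffices to prove $\int g\,\rmd P_X-D(P_X\|\mu)\le\log\prod_j\|f_j\|_{1/c_j}$ for every probability measure $P_X\ll\mu$. Fix such a $P_X$ and let $P_X\to Q_{Y_j|X}\to P_{Y_j}$. A Fubini step identifies $\int\mathbb{E}[\log f_j(Y_j)\mid X=x]\,\rmd P_X(x)=\int\log f_j\,\rmd P_{Y_j}=c_j\int\log f_j^{1/c_j}\,\rmd P_{Y_j}$. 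Now apply the variational formula to $\nu_j$ in the ``$\le$'' direction with test function $\log f_j^{1/c_j}$: $\int\log f_j^{1/c_j}\,\rmd P_{Y_j}\le D(P_{Y_j}\|\nu_j)+\log\int f_j^{1/c_j}\,\rmd\nu_j$. Summing over $j$ and using $d\ge\rmd(\cdot)\ge\sum_j c_j D(P_{Y_j}\|\nu_j)-D(P_X\|\mu)$ (from Definition~\ref{defn1}) yields exactly $\int g\,\rmd P_X-D(P_X\|\mu)\le\sum_j c_j\log\int f_j^{1/c_j}\,\rmd\nu_j=\log\prod_j\|f_j\|_{1/c_j}$, as required.

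\emph{Necessity: if \eqref{e_func} holds for all $f_j$, then $d\ge\rmd(\cdot)$.} The idea is that \eqref{e_func} is saturated by the ``aligned'' test functions. Given an admissible $P_X\ll\mu$ with $P_X\to Q_{Y_j|X}\to P_{Y_j}$, and assuming momentarily that all relevant divergences are finite, take $f_j:=(\rmd P_{Y_j}/\rmd\nu_j)^{c_j}$, so that $\|f_j\|_{1/c_j}=P_{Y_j}(\mathcal{Y}_j)^{c_j}=1$ and the right side of \eqref{e_func} equals $1$. For the left side, apply the variational formula to $\mu$ in the ``$\ge$'' direction with this same $P_X$: with $g$ as above, $0=\log 1\ge\log\int e^{g}\,\rmd\mu\ge\int g\,\rmd P_X-D(P_X\|\mu)$, and a Fubini step gives $\int g\,\rmd P_X=\sum_j c_j D(P_{Y_j}\|\nu_j)-d$. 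Hence $d\ge\sum_j c_j D(P_{Y_j}\|\nu_j)-D(P_X\|\mu)$; taking the supremum over $P_X$ gives $d\ge\rmd(\cdot)$. Combining the two parts, the admissible set is $[\rmd(\cdot),\infty)$ and its minimum is $\rmd(\cdot)$.

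I expect the main obstacle to be not the core duality argument, which is short, but the measure‑theoretic bookkeeping at the boundary. One must make the variational formula and the Fubini exchanges rigorous when $\mu,\nu_j$ are infinite; delineate the class of $f_j$ for which both sides of \eqref{e_func} are unambiguously defined (handling $\|f_j\|_{1/c_j}\in\{0,\infty\}$ and integrands of indeterminate form $\infty-\infty$); and, in the necessity direction, treat the degenerate case where $P_{Y_j}\not\ll\nu_j$ or $D(P_{Y_j}\|\nu_j)=+\infty$ for some admissible $P_X$ (whence $\rmd(\cdot)=+\infty$) — there one truncates $\rmd P_{Y_j}/\rmd\nu_j$ at a level that is then let grow, forcing the left side of \eqref{e_func} to blow up and so showing no finite $d$ is admissible.
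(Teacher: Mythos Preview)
Your argument is correct and is precisely the approach the paper has in mind: the paper does not give a self-contained proof but refers to \cite{lccv2015} and \cite{carlen2009subadditivity}, whose method is exactly the Donsker--Varadhan duality you apply, once to $\mu$ and once to each $\nu_j$, to turn the functional inequality into the relative-entropy inequality of Definition~\ref{defn1}. Your sufficiency and necessity directions, including the saturating choice $f_j=(\rmd P_{Y_j}/\rmd\nu_j)^{c_j}$, match the standard derivation.
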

We call \eqref{e_func} a \emph{generalized Brascamp-Lieb-like inequality} (GBLL).
The case of deterministic $Q_{Y_j|X}$  was considered in \cite{carlen2009subadditivity}, which we shall call a \emph{Brascamp-Lieb-like inequality} (BLL).
In the special case where $Q_{Y_j|X}$'s are a linear projections and $\mu$ and $\nu_j$ are Gaussian or Lebesgue, \eqref{e_func} is called a Brascamp-Lieb inequality; it is well-known that a Brascamp-Lieb inequality holds for a specific value of $d$ if and only if it holds for all Gaussian functions $(f_j)$ \cite{brascamp1976best}.

\begin{defn}
For nonnegative measures $\nu$ and $\mu$ on the same measurable space $(\mathcal{X},\mathscr{F})$ and $\gamma\in[1,\infty)$,
the $E_{\gamma}$ divergence is defined as
\begin{align}
E_{\gamma}(\nu\|\mu):=\sup_{\mathcal{A}\in\mathscr{F}}
\{\nu(\mathcal{A})-\gamma \mu(\mathcal{A})\}.
\end{align}
\end{defn}
Note that under this definition $E_1(P\|\mu)$ does not equal $\frac{1}{2}|P-\mu|$ if $\mu$ is not a probability measure. Properties of $E_{\gamma}$ used in this paper can be found in \cite{liu2015_egamma_arxiv}.

\begin{defn}
For $\delta\in[0,1)$, $Q_X$, $(Q_{Y_j|X})$ and $(\nu_j)$, define
\begin{align}
{\rm d}_{\delta}(Q_X,\nu_j,c^m):= \inf_{\mu\colon E_1(Q_X\|\mu)\le \delta}{\rm d}(\mu,\nu_j,c^m).
\label{e_smoothconst}
\end{align}
In the stationary memoryless case, define the \emph{$\delta$-smooth GBLL rate}\footnote{As is clear from the context, the random transformations implicit on the right side of \eqref{e6} are $(Q_{Y_j|X}^{\otimes n})$.}
\begin{align}
{\rm D}_{\delta}(Q_X,\nu_j,c^m):=
\limsup_{n\to\infty}\frac{1}{n}{\rm d}_{\delta}(Q_X^{\otimes n},
\nu_j^{\otimes n},c^m),
\label{e6}
\end{align}
and the \emph{smooth GBLL rate} is the limit
\begin{align}
{\rm D}_{0^+}(Q_X,\nu_j,c^m):=\lim_{\delta\downarrow0}{\rm D}_{\delta}(Q_X,\nu_j,c^m).
\end{align}
\end{defn}
\begin{rem}
Allowing unnormalized measures avoids the unnecessary step of normalization in the proof, and is in accordance with the literature on smooth R\'{e}nyi entropy, where such a relaxation generally gives rise to nicer properties and tighter non-asymptotic bounds, cf.~\cite{renner2005simple}\cite{liu2015_egamma_arxiv}.
\end{rem}

\begin{defn}\label{defn_dstar}
Given $Q_X$, $(Q_{Y_j|X})$ and $c^m\in(0,\infty)^m$, define
\begin{align}
\rmds(Q_X,c^m):=\sup_{P_{U|X}}\left\{\sum_{l=1}^m c_l I(U;Y_l)-I(U;X)\right\}.
\end{align}
We say $Q_X$, $(Q_{Y_j|X})$ and $(c_j)$ satisfy the \emph{$\delta$-smooth property} if
\begin{align}
{\rm D}_{\delta}(Q_X,Q_{Y_j},c^m)
=\rmds(Q_X,c^m),
\label{e_smooth}
\end{align}
\emph{(weak) smooth property} if ${\rm D}_{0^+}(Q_X,Q_{Y_j},c^m)
=\rmds(Q_X,c^m)$,
and \emph{strong smooth property} if \eqref{e_smooth} holds for all $\delta\in(0,1)$.
\end{defn}
From these definitions and a tensorization property of $\rmd(\cdot)$ \cite{lccv2015} we clearly have
\begin{align}
\rmd(Q_X,Q_{Y_j},c^m)
={\rm D}_0(Q_X,Q_{Y_j},c^m)
&\ge {\rm D}_{\delta}(Q_X,Q_{Y_j},c^m).
\end{align}
The goal is to explore conditions for ${\rm D}_{\delta}(Q_X,Q_{Y_j},c^m)=\rmds(Q_X,c^m)$.

\section{Achievabilities for Smooth GBLL}
Under various conditions, we provide upper bounds on ${\rm D}_{\delta}(Q_X,Q_{Y_j},c^m)$, establishing the achievability part of the strong smooth property.
\subsection{Hypercontractivity}
\label{sec_hypercontractivity}
If $\rmds(Q_X,c^m)=0$,
by an extension of the proof of equivalent formulations of hypercontractivity \cite{anan_13} we also have
$
\rmd(Q_X,Q_{Y_j},c^m)=0
$,
establishing that ${\rm D}_{0}(Q_X,Q_{Y_j},c^m)=\rmds(Q_X,c^m)$.

\subsection{Finite $|\mathcal{X}|$, and Beyond}
The main objective of this section is to show that
\begin{thm}\label{thm_discrete}
${\rm D}_{0^+}(Q_X,Q_{Y_j},c^m)\le\rmds(Q_X,c^m)$ if $\mathcal{X}$ is finite.
\end{thm}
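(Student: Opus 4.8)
The plan is to show that for any fixed $\epsilon > 0$ we can choose $\delta > 0$ and, for all large $n$, an auxiliary measure $\mu_n$ with $E_1(Q_X^{\otimes n}\|\mu_n)\le\delta$ such that $\tfrac1n\rmd(\mu_n,\nu_j^{\otimes n},c^m)\le \rmds(Q_X,c^m)+\epsilon$. The natural construction is to take $\mu_n$ to be (a constant multiple of) a \emph{tilted} product measure guided by the optimal test channel in the definition of $\rmds$. Concretely, let $P_{U|X}$ be a near-optimal channel for $\rmds(Q_X,c^m)$, inducing a joint law $P_{UXY_1\cdots Y_m}$ with $P_U,P_{X|U},P_{Y_j|U}$. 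The idea is that a ``good'' $\mu_n$ should not be $P_X^{\otimes n}=Q_X^{\otimes n}$ but rather a mixture over the $U$-type classes: for each conditional type $t$ of $x^n$ given a codeword $u^n$, $\mu_n$ should put mass proportional to the size of that type class so that the Radon--Nikodym derivative $\frac{\rmd P_{X|U=u^n}}{\rmd\mu_n}$ is essentially flat (up to $\exp(o(n))$) on the conditionally typical set. Since $|\mathcal{X}|<\infty$, the number of types is polynomial in $n$, so such a $\mu_n$ can be taken with total mass $e^{o(n)}$, and $Q_X^{\otimes n}$ is within $\delta$ in $E_1$ of $\mu_n$ once we discard the atypical part (whose $Q_X^{\otimes n}$-mass is exponentially small, hence $\le\delta$).

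The core estimate is then to bound $\rmd(\mu_n,\nu_j^{\otimes n},c^m)=\sup\{\sum_l c_l D(P_{Y_l}\|\nu_l^{\otimes n})-D(P_{X^n}\|\mu_n)\}$ over $P_{X^n}\ll\mu_n$ with $P_{X^n}\to Q_{Y_j|X}^{\otimes n}\to P_{Y_j^n}$. First I would reduce the supremum to $P_{X^n}$ supported on the typical set used to build $\mu_n$ (an $E_1$-perturbation / support-restriction argument, using that $\rmd$ is monotone and that enlarging $\mu_n$ outside the support does not help the adversary much). On that set, $D(P_{X^n}\|\mu_n) \ge D(P_{X^n}\|P_X^{\otimes n}) - o(n)$ by the flatness of the likelihood ratio, so we lose only $o(n)$ relative to using the product measure \emph{inside} the typical set. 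Then I would single-letterize $\sum_l c_l D(P_{Y_l^n}\|\nu_l^{\otimes n}) - D(P_{X^n}\|P_X^{\otimes n})$: introduce a uniform time-sharing index $T$, set $U:=(T, $ the auxiliary encoding of the type class$)$ or more simply bound $D(P_{X^n}\|P_X^{\otimes n})\ge \sum_i D(P_{X_i|X^{i-1}}\|P_X)$ and $D(P_{Y_l^n}\|\nu_l^{\otimes n})\le$ a sum of single-letter terms via the chain rule and the data-processing/convexity structure, arriving at a bound of the form $n[\sum_l c_l I(V;Y_l) - I(V;X)]$ for a single-letter auxiliary $V$ (absorbing $T$ and $X^{i-1}$), which is $\le n\,\rmds(Q_X,c^m)$ by definition. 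Combining with the $o(n)$ slack and then letting $\delta\downarrow0$ and $\epsilon\downarrow0$ gives ${\rm D}_{0^+}\le\rmds$.

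The main obstacle I expect is the single-letterization of the \emph{$D$-difference} $\sum_l c_l D(P_{Y_l^n}\|\nu_l^{\otimes n}) - D(P_{X^n}\|\mu_n)$: unlike a pure mutual information, the KL terms against fixed product measures $\nu_l^{\otimes n}$ are only superadditive (not additive) for general $P_{Y_l^n}$, and the subtracted term $D(P_{X^n}\|\mu_n)$ is taken against the \emph{non-product} $\mu_n$, so one cannot naively tensorize. This is exactly why a product $\mu_n$ fails and why the smoothing is essential: the type-class construction is what converts $D(\cdot\|\mu_n)$ into $D(\cdot\|P_X^{\otimes n})$ up to $o(n)$ while keeping $\mu_n$ close to $Q_X^{\otimes n}$ in $E_1$. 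Getting the two $o(n)$ corrections (flatness of the likelihood ratio; exponential smallness of the atypical mass) to hold \emph{uniformly} over the adversary's choice of $P_{X^n}$ — and choosing the typicality radius so that it shrinks slowly enough for flatness yet fast enough for the $E_1$ bound — is the delicate quantitative heart of the argument. The finiteness of $\mathcal{X}$ enters decisively here through the polynomial type-counting bound; the ``and Beyond'' in the subsection title presumably refers to pushing this same scheme to sources admitting an analogous type-covering structure.
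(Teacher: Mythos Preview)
Your proposal has a genuine gap and also takes an unnecessarily complicated detour. The construction of $\mu_n$ via a near-optimal $P_{U|X}$ and a codebook of $u^n$'s is a red herring: the auxiliary $U$ in $\rmds$ is a \emph{supremum} variable and should emerge from the single-letterization, not be baked into $\mu_n$. The paper's choice is far simpler: $\mu_n$ is just the restriction of $Q_X^{\otimes n}$ to the (unconditional) strongly typical set $\mathcal{S}_\epsilon^n$. Then $D(P_{X^n}\|\mu_n)=D(P_{X^n}\|Q_X^{\otimes n})$ exactly for any $P_{X^n}$ supported there, so your ``flatness of the likelihood ratio'' step and the associated $o(n)$ bookkeeping are not needed at all.

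The real gap is your final line ``$\le n\,\rmds(Q_X,c^m)$ by definition.'' After single-letterizing with $V=(T,X^{T-1})$ one obtains
\[
\sum_l c_l D(P_{Y_{l,T}|V}\|Q_{Y_l}|P_V)-D(P_{X_T|V}\|Q_X|P_V)
=\sigma(P_{X_T})+\Bigl[\sum_l c_l I(V;Y_{l,T})-I(V;X_T)\Bigr],
\]
where the mutual informations are computed under the joint law with $X_T$-marginal $P_{X_T}$, which is the adversary's averaged marginal, \emph{not} $Q_X$. Hence the bracket is bounded by the supremum in the definition of $\rmds$ only at the wrong marginal, and the whole expression is $\phi(P_{X_T})$, the concave envelope of $\sigma$ in Lemma~\ref{lem_weakcont}, not $\rmds(Q_X,c^m)=\phi(Q_X)$. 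The missing step is precisely to show $\phi$ is upper semicontinuous at $Q_X$, so that forcing $P_{X_T}\to Q_X$ (via the typical-set support) forces $\phi(P_{X_T})\to\phi(Q_X)$. This is where finiteness of $\mathcal{X}$ actually enters in the paper: not through polynomial type-counting, but because $\phi$ is concave and bounded on the simplex and $Q_X$ lies in its relative interior, so Rockafellar's theorem on concave functions gives continuity at $Q_X$. Your proposal never invokes this continuity, and without it the argument does not close; indeed, the remark after Lemma~\ref{lem_weakcont} shows the semicontinuity can fail when $|\mathcal{X}|=\infty$.
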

We present a general method of proving achievability of smooth GBLL which, although not intuitive at the first sight, turns out to be successful for the distinct cases of the discrete and Gaussian sources. The following tensorization result is useful:
\begin{lem}\label{lem_singleletter}
Suppose $\tau_{\alpha}\colon \mathcal{X}\to\mathbb{R}$ is measurable for each (abstract) index $\alpha\in\mathcal{A}$. Fix any $\epsilon\in(0,1)$, and for each $n\in\{1,\dots\}$ define $g(n)$ as the supremum of
\begin{align}
\frac{1}{n} \left[
\sum_j c_j D(P_{Y^n|U}\|\nu_j^{\otimes n}|P_U)-D(P_{X^n|U}\|\mu^{\otimes n}|P_U)\right]
\label{e_f}
\end{align}
over $P_{UX^n}$ such that
$
\mathbb{E}\left[\frac{1}{n}\sum_{i=1}^n \tau_{\alpha}(\hat{X}_i)\right]\le\epsilon
$,
where
$\hat{X}^n\sim P_{X^n}$
and
$P_{UX^nY^n}:=P_{UX^n}Q_{Y|X}^{\otimes n}$. Then $g(n)\le g(1)$.
\end{lem}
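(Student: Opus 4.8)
The plan is a time-sharing (single-letterization) argument; the only genuinely non-obvious ingredient is the correct choice of what to append to the auxiliary variable $U$. I would start by fixing $n$ and an arbitrary $P_{UX^n}$ feasible for the $n$-letter problem, i.e.\ one with $\mathbb{E}\big[\tfrac1n\sum_{i=1}^n\tau_\alpha(\hat X_i)\big]\le\epsilon$ for every $\alpha\in\mathcal{A}$ (where $\hat X^n\sim P_{X^n}$), and with $P_{UX^nY^n}=P_{UX^n}Q_{Y|X}^{\otimes n}$. Introduce a random index $T$ uniform on $\{1,\dots,n\}$ and independent of $(U,X^n,Y^n)$, and set
\[
V:=(U,X^{T-1},T),\qquad \tilde X:=X_T,\qquad \tilde Y:=Y_T.
\]
The goal is to show that $(V,\tilde X,\tilde Y)$ is an admissible configuration for the single-letter ($n=1$) problem and that the single-letter objective it attains is at least $\tfrac1n$ times the $n$-letter objective attained by $P_{UX^n}$; taking the supremum over $P_{UX^n}$ then yields $g(n)\le g(1)$.

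First I would check admissibility: since the channel is memoryless and $T\perp(U,X^n,Y^n)$, conditioned on $V=(u,x^{i-1},i)$ and $\tilde X=x_i$ the variable $\tilde Y$ has law $Q_{Y|X}(\cdot\,|\,x_i)$, so $P_{V\tilde X\tilde Y}=P_{V\tilde X}\,Q_{Y|X}$; and $P_{\tilde X}=\tfrac1n\sum_{i=1}^nP_{X_i}$, so $\int\tau_\alpha\,{\rm d}P_{\tilde X}=\mathbb{E}\big[\tfrac1n\sum_i\tau_\alpha(\hat X_i)\big]\le\epsilon$ for every $\alpha$. Next I would bound the objective termwise. For the subtracted term, the chain rule for relative entropy against a \emph{product} reference measure gives the exact identity
\[
D\!\big(P_{X^n|U}\,\big\|\,\mu^{\otimes n}\,\big|\,P_U\big)=\sum_{i=1}^nD\!\big(P_{X_i|UX^{i-1}}\,\big\|\,\mu\,\big|\,P_{UX^{i-1}}\big)=n\,D\!\big(P_{\tilde X|V}\,\big\|\,\mu\,\big|\,P_V\big),
\]
the last step because $T$ is uniform and $P_{X_T|U,X^{T-1},T=i}=P_{X_i|UX^{i-1}}$. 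For each $\nu_j$-term I would instead use the chain rule in the form $D(P_{Y^n|U}\|\nu_j^{\otimes n}|P_U)=\sum_{i=1}^nD(P_{Y_i|UY^{i-1}}\|\nu_j\,|\,P_{UY^{i-1}})$ together with convexity of $D(\cdot\|\nu_j)$: the memoryless structure gives $P_{Y_i|UX^{i-1}Y^{i-1}}=P_{Y_i|UX^{i-1}}$, so $P_{Y_i|UY^{i-1}}(\cdot\,|\,u,y^{i-1})$ is a mixture over $x^{i-1}$ (with mixing law $P_{X^{i-1}|U=u,Y^{i-1}=y^{i-1}}$) of the kernels $P_{Y_i|UX^{i-1}}(\cdot\,|\,u,x^{i-1})$, whence, averaging over $(u,y^{i-1})$,
\[
D\!\big(P_{Y_i|UY^{i-1}}\,\big\|\,\nu_j\,\big|\,P_{UY^{i-1}}\big)\le D\!\big(P_{Y_i|UX^{i-1}}\,\big\|\,\nu_j\,\big|\,P_{UX^{i-1}}\big),
\]
and summing over $i$, together with $\tfrac1n\sum_iD(P_{Y_i|UX^{i-1}}\|\nu_j|P_{UX^{i-1}})=D(P_{\tilde Y|V}\|\nu_j|P_V)$, gives $D(P_{Y^n|U}\|\nu_j^{\otimes n}|P_U)\le n\,D(P_{\tilde Y|V}\|\nu_j|P_V)$.

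Combining the two bounds (using $c_j>0$), the single-letter objective at $(V,\tilde X,\tilde Y)$, namely $\sum_jc_jD(P_{\tilde Y|V}\|\nu_j|P_V)-D(P_{\tilde X|V}\|\mu|P_V)$, is at least $\tfrac1n\big[\sum_jc_jD(P_{Y^n|U}\|\nu_j^{\otimes n}|P_U)-D(P_{X^n|U}\|\mu^{\otimes n}|P_U)\big]$; since $(V,\tilde X,\tilde Y)$ is admissible this is $\le g(1)$, and taking the supremum over feasible $P_{UX^n}$ gives $g(n)\le g(1)$. I expect the main point to get right to be the choice of auxiliary $V=(U,X^{T-1},T)$ — in particular the inclusion of the past $X^{i-1}$, without which each $\nu_j$-term would move in the wrong direction (for $V=(U,T)$ one only gets $\tfrac1n\sum_iD(P_{Y_i|U}\|\nu_j|P_U)\le\tfrac1nD(P_{Y^n|U}\|\nu_j^{\otimes n}|P_U)$). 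With $X^{i-1}$ present the $\mu$-term tensorizes \emph{with equality} while each $\nu_j$-term is \emph{upper}-bounded by $n$ times its single-letter value; and in the multi-output GBLL application this choice is moreover forced, since a common $Y_j$-past is unavailable. The supporting facts requiring care are the conditional-independence identity $P_{Y_i|UX^{i-1}Y^{i-1}}=P_{Y_i|UX^{i-1}}$, the mixture/convexity step resting on it, and the existence of regular conditional distributions on the (possibly abstract) alphabets.
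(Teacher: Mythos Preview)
Your proposal is correct and follows essentially the same route as the paper: introduce a uniform time index $T$ (the paper calls it $I$), identify the new auxiliary as $(U,X^{T-1},T)$, apply the chain rule so that the $\mu$-term tensorizes with equality, and use the Markov chain $Y_{jT}-(U,T,X^{T-1})-Y_j^{T-1}$ to upper-bound each $\nu_j$-term. Your convexity-of-$D(\cdot\|\nu_j)$ argument is exactly the unpacking of that Markov-chain step, and your remark that a common past $Y_j^{T-1}$ is unavailable across $j$ (forcing the choice of $X^{T-1}$) is the right justification for this particular single-letterization.
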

The functions $\tau_{\alpha}(\cdot)$ can be thought of as (possibly negative) cost functions that enforce the $P_{UX}$
maximizing \eqref{e_f} to satisfy $P_X\approx Q_X$. If the probability that an i.i.d.~sequence induces a small cost is large, then one can choose the $\mu$ in the definition of the smooth property to be the restriction\footnote{In this paper, by restriction of a measure on a set we mean the result of cutting off the measure outside that set (without renormalizing).} of $Q_X^{\otimes n}$ on such a set. Therefore the following lemma will be the key to our proofs of the smooth property:
\begin{lem}\label{lem_key}
Suppose $\tau_{\alpha}$ is as in Lemma~\ref{lem_singleletter} and
define
\begin{align}
\mathcal{S}_{\epsilon}^n
:=\left\{x^n\colon\frac{1}{n}\sum_{i=1}^n \tau_{\alpha}(x_i)\le \epsilon\right\}.
\label{e_s}
\end{align}
If $P_{X^n}$ is supported on $\mathcal{S}_{\epsilon}^n$ for each $n$, then
\begin{align}
&\quad\limsup_{n\to\infty}\frac{1}{n}
\left[\sum_j c_jD(P_{Y_j^n}\|\nu_j^{\otimes n})
-D(P_{X^n}\|\mu^{\otimes n})\right]
\nonumber\\
&\le
\sup
\left\{\sum c_j D(P_{Y_j|U}\|\nu_j|P_U)
-D(P_{X|U}\|\mu|P_U)\right\}
\label{e_14}
\end{align}
where the sup on the right is over $P_{UX}$ such that $\mathbb{E}[\tau_{\alpha}(\hat{X})]\le\epsilon$.
\end{lem}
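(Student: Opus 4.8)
\noindent\emph{Proof plan.} The plan is to obtain Lemma~\ref{lem_key} directly from the tensorization statement of Lemma~\ref{lem_singleletter}, specialized to a \emph{deterministic} auxiliary $U$. The crucial point is that the hypothesis ``$P_{X^n}$ is supported on $\mathcal{S}_\epsilon^n$'' is stronger than the average-cost constraint used to define $g(n)$ in Lemma~\ref{lem_singleletter}: every $x^n\in\mathcal{S}_\epsilon^n$ satisfies $\frac1n\sum_{i=1}^n\tau_\alpha(x_i)\le\epsilon$ for each $\alpha\in\mathcal{A}$, so integrating against $P_{X^n}$ yields $\mathbb{E}\big[\frac1n\sum_{i=1}^n\tau_\alpha(\hat X_i)\big]\le\epsilon$ for every $\alpha$, where $\hat X^n\sim P_{X^n}$.

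First I would fix $n$ and a $P_{X^n}$ supported on $\mathcal{S}_\epsilon^n$, and take $U$ to be a constant, say $U\equiv u_0$, so that $P_{UX^n}=\delta_{u_0}\otimes P_{X^n}$ and consequently $P_{Y_j^n|U}=P_{Y_j^n}$, $D(P_{X^n|U}\|\mu^{\otimes n}|P_U)=D(P_{X^n}\|\mu^{\otimes n})$, and likewise for the $Y_j^n$ terms. By the observation above this $P_{UX^n}$ is feasible in the supremum defining $g(n)$, hence
\begin{align}
\frac1n\Big[\sum_j c_j D(P_{Y_j^n}\|\nu_j^{\otimes n})-D(P_{X^n}\|\mu^{\otimes n})\Big]\le g(n)\le g(1),
\nonumber
\end{align}
the last inequality being Lemma~\ref{lem_singleletter}. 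It remains to identify $g(1)$: for $n=1$ the feasibility condition is exactly $\mathbb{E}[\tau_\alpha(\hat X)]\le\epsilon$ for all $\alpha$, and the objective \eqref{e_f} reduces to $\sum_j c_j D(P_{Y_j|U}\|\nu_j|P_U)-D(P_{X|U}\|\mu|P_U)$, so $g(1)$ coincides with the right-hand side of \eqref{e_14}. Since the displayed bound holds for every $n$, taking $\limsup_{n\to\infty}$ delivers the claim.

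Granting Lemma~\ref{lem_singleletter}, there is no substantive analytic difficulty; the only care needed is bookkeeping: checking that a degenerate $U$ is a legitimate choice (so that the unconditional divergence difference is genuinely an instance of \eqref{e_f}), that the support hypothesis forces the average-cost hypothesis simultaneously over all $\alpha$, and that the possibly-infinite cases (e.g.\ $D(P_{X^n}\|\mu^{\otimes n})=+\infty$, or an expected cost equal to $-\infty$) only help the desired inequality. The genuinely hard step has been deliberately factored out into Lemma~\ref{lem_singleletter} itself: single-letterizing $D(P_{X^n}\|\mu^{\otimes n})$ under the product structure, i.e.\ controlling the correlations so that the per-letter bound survives the normalization, is precisely what makes $g(n)\le g(1)$ work, and it is the place where the product form of $\mu^{\otimes n}$, $\nu_j^{\otimes n}$, $Q_{Y_j|X}^{\otimes n}$ is essential.
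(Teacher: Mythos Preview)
Your proposal is correct and follows essentially the same route as the paper's own proof: the support hypothesis on $\mathcal{S}_\epsilon^n$ forces the average-cost constraint of Lemma~\ref{lem_singleletter} to hold (you make explicit the choice of a constant $U$, which the paper leaves implicit), and then $g(n)\le g(1)$ finishes the job. The paper's argument is simply a terser rendering of exactly what you wrote.
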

A remarkable aspect of Lemma~\ref{lem_key} is that the left side of \eqref{e_14}, which is a multi-letter quantity from the definition of $\rmd(\cdot)$, is upper bounded by a single-letter quantity.

\begin{lem}\label{lem_weakcont}
Suppose $(\mathcal{X},\mathscr{F})$ is a second countable topological space and $Q_X$ is a Borel measure. Define
\begin{align}
\sigma\colon P_X\mapsto\sum c_j D(P_{Y_j}\|Q_{Y_j})
-D(P_{X}\|Q_X).
\label{e_semi}
\end{align}
If $\phi$, the concave envelope of $\sigma$, is upper semicontinuous at $Q_X$,
then ${\rm D}_{0^+}(Q_X,Q_{Y_j},c^m)\le\rmds(Q_X,c^m)$.
\end{lem}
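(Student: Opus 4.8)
The plan is to exhibit, for each sufficiently small $\delta>0$, an auxiliary measure $\mu$ with $E_1(Q_X^{\otimes n}\|\mu)\le\delta$ whose associated $\rmd$-rate is close to $\rmds(Q_X,c^m)$, and to control that rate via Lemma~\ref{lem_key}. First I would instantiate Lemma~\ref{lem_key} and Lemma~\ref{lem_singleletter} with a well-chosen family of cost functions $\tau_\alpha$. Because $\mathcal{X}$ is second countable, its topology has a countable base, so I can take $\{\tau_\alpha\}$ to be (continuous, bounded, $[0,1]$-valued) functions that separate $Q_X$ from any probability measure bounded away from it in the weak topology; concretely, for each basic open set $O$ one adds a cost $\tau_O$ that is small on a weak-neighborhood of $Q_X$ but bounded below on measures with $P_X(O)$ far from $Q_X(O)$. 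With $\epsilon$ fixed, the typical set $\mathcal{S}_\epsilon^n$ of \eqref{e_s} then has $Q_X^{\otimes n}(\mathcal{S}_\epsilon^n)\to 1$ by the weak law of large numbers (applied coordinatewise to each $\tau_\alpha$, together with a union bound over the countably many $\alpha$, using that finitely many suffice to pin down a weak neighborhood up to the tolerance we need). Hence for $n$ large the restriction $\mu:=Q_X^{\otimes n}|_{\mathcal{S}_\epsilon^n}$ satisfies $E_1(Q_X^{\otimes n}\|\mu)=Q_X^{\otimes n}((\mathcal{S}_\epsilon^n)^c)\le\delta$, so $\mu$ is admissible in \eqref{e_smoothconst}.

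Next I would bound $\rmd(\mu,Q_{Y_j},c^m)$. Any feasible $P_{X^n}\ll\mu$ is supported on $\mathcal{S}_\epsilon^n$, so Lemma~\ref{lem_key} applies and gives
\begin{align}
\limsup_{n\to\infty}\frac{1}{n}\rmd(\mu,Q_{Y_j},c^m)
\le
\sup_{P_{UX}:\,\mathbb{E}[\tau_\alpha(\hat X)]\le\epsilon}
\left\{\sum_j c_j D(P_{Y_j|U}\|Q_{Y_j}|P_U)-D(P_{X|U}\|Q_X|P_U)\right\}.
\label{e_pf1}
\end{align}
(Here I use that $\mu$ restricted from $Q_X^{\otimes n}$ has the form $Q_X^{\otimes n}$ on its support, so the $D(\cdot\|\mu^{\otimes n})$ terms coincide with $D(\cdot\|Q_X^{\otimes n})$ on admissible inputs; the slight care needed is that $\mu$ is not literally a product, but since $P_{X^n}$ lives on $\mathcal{S}_\epsilon^n$ we have $\frac{\rmd P_{X^n}}{\rmd\mu}=\frac{\rmd P_{X^n}}{\rmd Q_X^{\otimes n}}$ a.e., so the relative-entropy terms in \eqref{e_14} are unchanged.) The right side of \eqref{e_pf1} is an expectation over $U$ of the single-letter functional $\sigma(P_{X|U=u})$ of \eqref{e_semi}, averaged subject to $\mathbb{E}[\tau_\alpha(\hat X)]\le\epsilon$; by the definition of the concave envelope $\phi$, it is at most $\sup\{\phi(P_X):\mathbb{E}[\tau_\alpha(\hat X)]\le\epsilon\ \forall\alpha\}$. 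Taking $\epsilon\downarrow0$, the feasible set shrinks to $\{Q_X\}$ in the weak topology by the separating property of the $\tau_\alpha$'s, and upper semicontinuity of $\phi$ at $Q_X$ yields $\limsup_{\epsilon\downarrow0}\sup\{\phi(P_X):\dots\}\le\phi(Q_X)$. Finally $\phi(Q_X)=\rmds(Q_X,c^m)$: $\phi(Q_X)\le\rmds$ because writing $Q_X$ as a mixture $\sum_u P_U(u)P_{X|U=u}$ with $\mathbb{E}[P_{X|U}]=Q_X$ realizes exactly a test channel $P_{U|X}$ with $P_X=Q_X$, so $\sum_u P_U(u)\sigma(P_{X|U=u})=\sum_j c_j I(U;Y_j)-I(U;X)\le\rmds$; and $\phi(Q_X)\ge\rmds$ since any $P_{U|X}$ with $P_X=Q_X$ gives such a mixture representation. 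Chaining the inequalities, $\limsup_n\frac1n\rmd_\delta(Q_X^{\otimes n},Q_{Y_j}^{\otimes n},c^m)\le\sup\{\phi(P_X):\mathbb{E}[\tau_\alpha(\hat X)]\le\epsilon(\delta)\}$, and letting $\delta\downarrow0$ (hence $\epsilon\downarrow0$) gives ${\rm D}_{0^+}(Q_X,Q_{Y_j},c^m)\le\rmds(Q_X,c^m)$.

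The main obstacle I anticipate is not the mutual-information bookkeeping but the construction of the cost family $\{\tau_\alpha\}$ and the verification that the typical sets $\mathcal{S}_\epsilon^n$ simultaneously (i) capture almost all of $Q_X^{\otimes n}$ for every fixed $\epsilon>0$ and (ii) force the feasible single-letter measures $P_X$ into an arbitrarily small weak-neighborhood of $Q_X$ as $\epsilon\downarrow0$. Requirement (i) needs the weak law of large numbers to hold uniformly enough across the countably many $\tau_\alpha$ — this is where second countability is essential, since it lets a countable collection of bounded continuous functions metrize weak convergence, and a diagonal/union-bound argument then handles all of them with a single $n$. Requirement (ii) is exactly where the hypothesis ``$\phi$ upper semicontinuous at $Q_X$'' is consumed: without it the limit $\limsup_{\epsilon\downarrow0}\sup\{\phi(P_X):\dots\}$ need not collapse to $\phi(Q_X)$. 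A secondary technical point is justifying that the sup over $P_{UX}$ in \eqref{e_pf1} is genuinely bounded by the concave-envelope sup — one must check that the cardinality of $U$ may be taken finite (or that the mixture interpretation survives for general $U$), which is a standard Carathéodory/Fenchel-Bunt argument but should be stated.
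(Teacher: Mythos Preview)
Your approach is the paper's: restrict $Q_X^{\otimes n}$ to a cost-typical set, invoke Lemma~\ref{lem_key} to single-letterize, bound by the concave envelope $\phi$ over the feasible single-letter measures, then shrink the feasible set to $\{Q_X\}$ and use upper semicontinuity. The implementation differs in one useful way. The paper takes, for each \emph{finite} Borel partition $(\mathcal{B}_\alpha)$, the costs $\tau_\alpha(x)=1\{x\in\mathcal{B}_\alpha\}/Q_X(\mathcal{B}_\alpha)-1$, so that only finitely many constraints appear in any given $\mathcal{S}_\epsilon^n$ and the law-of-large-numbers step is immediate; the resulting single-letter feasible set is $\{P_X:\,P_{X|\mathcal{G}}\le(1+\epsilon)Q_{X|\mathcal{G}}\}$ for the finite $\sigma$-algebra $\mathcal{G}$ generated by the partition. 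It then takes an \emph{outer} infimum over all $(\mathcal{G},\epsilon)$, picks a nested sequence $\mathcal{G}_k$ containing a countable base together with $\epsilon_k\downarrow0$, and verifies $P_X^k\Rightarrow Q_X$ via the closed-set form of Portmanteau. This cleanly bypasses the obstacle you flag: the countable family enters only through the outer infimum, never inside a single $\mathcal{S}_\epsilon^n$, so no uniform LLN or diagonal/union-bound gymnastics are needed. Your continuous-test-function version is fine once you likewise commit to finitely many $\tau_\alpha$ per stage and send both the stage index and $\epsilon$ to their limits; the partition-indicator choice just makes the weak-convergence check (and the fact that one-sided upper bounds on partition cells already pin down both sides, since $\sum_\alpha P_X(\mathcal{B}_\alpha)=1$) a bit more transparent.
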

\begin{rem}
If $c_1=\dots =c_m=0$, then $\phi(P_X)=-D(P_{X}\|Q_X)$ always satisfies the upper semicontinuity in Lemma~\ref{lem_weakcont} because of the weak semicontinuity of the relative entropy. On the other hand, taking $m=1$, $c_1=2$, $Q_X$ any distribution on a countably infinite alphabet with $H(Q_X)<\infty$,
and $Q_{Y_1|X}$ the identity transformation,
we see $\sigma(P_X)=H(P_X)+D(P_X\|Q_X)$ and the upper semicontinuity condition in Lemma~\ref{lem_weakcont} fails.
\end{rem}

\begin{proof}[Proof of Theorem~\ref{thm_discrete}]
Assume w.l.o.g.~that $Q_X(x)>0,\,\forall x$ since otherwise we can delete $x$ from $\mathcal{X}$. Then $Q_X$ is in the interior of the probability simplex. Moreover $\phi(\cdot)$ in Lemma~\ref{lem_weakcont} is clearly bounded.
Thus by \cite[Corollary~7.4.1]{rockafellar2015convex},
the weak semicontinuity in Lemma~\ref{lem_weakcont} is fulfilled.
\end{proof}
\begin{rem}
For general $\mathcal{X}$, one cannot use the property of convex functions to conclude the semicontinuity as in the proof of Theorem~\ref{thm_discrete}. In fact, whenever $|\mathcal{X}|=\infty$, there are points in $\mathcal{X}$ with arbitrarily small probability, thus $Q_X$ cannot be in the interior of the probability simplex even under the stronger topology of total variation.
\end{rem}

\subsection{Gaussian Case}
The semicontinuity assumption in
Lemma~\ref{lem_weakcont}
appears too strong
for the case of the Gaussian distribution, which has a non-compact support. Nevertheless, we can proceed by picking a different $\tau_{\alpha}(\cdot)$ in Lemma~\ref{lem_key}.
\begin{thm}\label{thm_gauss}
${\rm D}_{0^+}(Q_X,Q_{Y_j},c^m)\le\rmds(Q_X,c^m)$ if $Q_{\bf X}$ and $(Q_{{\bf Y}_j|\bf X})$ are Gaussian.
\end{thm}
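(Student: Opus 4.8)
The plan is to mimic the structure of the proof of Theorem~\ref{thm_discrete}, but since the Gaussian $Q_{\mathbf X}$ is not in the interior of any probability simplex and its support is non-compact, we cannot directly invoke upper semicontinuity of the concave envelope as in Lemma~\ref{lem_weakcont}. Instead we go through Lemma~\ref{lem_key} with a carefully chosen family of cost functions $\tau_\alpha$. The natural choice is to take $\tau$ to be (a centered version of) the quadratic form $x\mapsto \|x\|^2$, or more precisely coordinate-wise second moments, so that the constraint set $\mathcal S_\epsilon^n=\{x^n\colon \frac1n\sum_i \tau(x_i)\le \epsilon\}$ captures, up to an $\epsilon$-relaxation, the typical second-order behavior of $Q_{\mathbf X}^{\otimes n}$. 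Because a Gaussian i.i.d.\ sequence concentrates its empirical covariance around the true covariance (weak law of large numbers), $Q_{\mathbf X}^{\otimes n}(\mathcal S_\epsilon^n)\to 1$; hence the restriction of $Q_{\mathbf X}^{\otimes n}$ to $\mathcal S_\epsilon^n$ is within $E_1$-distance $\delta$ of $Q_{\mathbf X}^{\otimes n}$ for all large $n$, and is an admissible $\mu$ in \eqref{e_smoothconst}. One may need $\tau$ to control a slightly larger collection of statistics (e.g.\ all entries of the empirical Gram matrix of $\mathbf X$ together with lower bounds, i.e.\ $\tau_\alpha(x)=\pm (a^\top x)^2 \mp a^\top \Sigma a$), so that $\mathbb E[\tau_\alpha(\hat X)]\le\epsilon$ for all $\alpha$ forces $\mathrm{Cov}(\hat X)$ into an $\epsilon$-ball around $\Sigma_{\mathbf X}$.

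With this choice, Lemma~\ref{lem_key} reduces the multi-letter $\limsup \frac1n[\sum_j c_j D(P_{Y_j^n}\|\nu_j^{\otimes n}) - D(P_{X^n}\|\mu^{\otimes n})]$ to the single-letter supremum of $\sum_j c_j D(P_{Y_j|U}\|\nu_j|P_U)-D(P_{X|U}\|\mu|P_U)$ over $P_{UX}$ with $\mathbb E[\tau_\alpha(\hat X)]\le\epsilon$ for all $\alpha$. Here is where the "liberties" flagged after Definition~\ref{defn1} --- that $\mu,\nu_j$ need not be probability measures and $\mu\to Q_{Y_j|X}\to\nu_j$ need not hold --- become essential. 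I would not take $\mu=Q_{\mathbf X}$ and $\nu_j=Q_{\mathbf Y_j}$ directly in the single-letter bound; rather I would choose $\mu$ and $\nu_j$ to be suitably scaled Gaussians (or scaled Lebesgue measures) adapted to the covariance constraint, so that the single-letter functional is bounded by $\rmds(Q_{\mathbf X},c^m)$ up to a term that vanishes as $\epsilon\downarrow 0$. Concretely, for a Gaussian $P_{UX}$ the quantities $D(P_{Y_j|U}\|\nu_j|P_U)$ and $D(P_{X|U}\|\mu|P_U)$ become explicit log-det expressions, and the constraint on $\mathrm{Cov}(\hat X)$ plus the optimality structure of Gaussian extremizers for Brascamp--Lieb-type inequalities (the fact, cited in the excerpt via \cite{brascamp1976best}, that Gaussians are extremal) lets one identify the supremum. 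Non-Gaussian $P_{UX}$ can be handled by a Gaussianization / maximum-entropy argument: replacing $P_{UX}$ by the Gaussian with the same second moments only increases $\sum_j c_j D(P_{Y_j|U}\|\nu_j|P_U)$-type terms relative to $D(P_{X|U}\|\mu|P_U)$ when $\mu,\nu_j$ are Gaussian, by the entropy power / Gaussian saddle-point inequalities.

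The main obstacle, I expect, is precisely this last reduction: showing that the single-letter optimization in \eqref{e_14} under the relaxed second-moment constraint does not exceed $\rmds(Q_{\mathbf X},c^m)+o_\epsilon(1)$. Two difficulties compound here. First, the auxiliary $U$ ranges over an unrestricted alphabet, so one must argue that Gaussian $U$ (jointly Gaussian with $\mathbf X$) suffices --- this is the Gaussian-extremizer step and is where I would lean hardest on the Brascamp--Lieb structure and on convexity/concavity of log-det functionals along the relevant matrix paths. Second, the cost constraint only pins down $\mathrm{Cov}(\hat X)$ near $\Sigma_{\mathbf X}$ but not exactly, and since $D(P_X\|\mu)$ with unnormalized Gaussian $\mu$ can be negative, one must verify the perturbation in covariance changes the optimum continuously; this is the "$\limsup$ then $\delta\downarrow 0$" bookkeeping that gives the $0^+$ rather than a fixed-$\delta$ statement. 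Once these pieces are in place, combining Lemma~\ref{lem_key} with the admissibility of the restricted Gaussian $\mu_n$ yields ${\rm D}_{0^+}(Q_{\mathbf X},Q_{\mathbf Y_j},c^m)\le \rmds(Q_{\mathbf X},c^m)$, and the matching lower bound is already supplied by the tensorization inequality $\rmd = {\rm D}_0 \ge {\rm D}_\delta$ together with $\rmd\ge\rmds$ (take $P_{X^n}$ i.i.d.\ conditioned on $U$), completing the proof.
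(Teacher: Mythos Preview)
Your plan is essentially the paper's: choose quadratic $\tau_\alpha$, restrict $Q_{\mathbf X}^{\otimes n}$ to the resulting set, apply Lemma~\ref{lem_key}, reduce to a covariance-constrained single-letter optimization, and finish by Gaussian optimality plus continuity in $\epsilon$. Two concrete refinements in the actual proof are worth noting. First, the paper uses only a \emph{one-sided} constraint, $\tau_\alpha(\mathbf x)=(\alpha^\top\bsigma^{-1/2}\mathbf x)^2-1$ for unit $\alpha$, so that $\mathcal S_{\epsilon_1}^n$ encodes $\tfrac1n\sum_i\mathbf x_i\mathbf x_i^\top\preceq(1+\epsilon_1)\bsigma$; this matches exactly the domain of the function $F(\mathbf M)$ in Proposition~\ref{prop14}, which is the Gaussian-optimality input (from the authors' prior work, not classical Brascamp--Lieb: the supremum of $-\sum_jc_jh(\mathbf Y_j|U)+h(\mathbf X|U)$ under $\bsigma_{\mathbf X}\preceq\mathbf M$ is achieved by constant $U$ and Gaussian $\mathbf X$). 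Second, because Lemma~\ref{lem_key} is applied with Lebesgue $\mu,\nu_j$, the resulting bound is $F((1+\epsilon_1)\bsigma)$, whereas $\mathrm D_\delta$ is defined relative to $Q_{\mathbf Y_j}^{\otimes n}$ and (a restriction of) $Q_{\mathbf X}^{\otimes n}$; the paper bridges this gap by further intersecting with a weakly typical set $\mathcal T_{\epsilon_2}^n$ on which the empirical average of $\imath_{Q_{\mathbf X}\|\mu}-\sum_jc_j\mathbb E[\imath_{Q_{\mathbf Y_j}\|\nu_j}\mid\mathbf X=\cdot]$ is within $\epsilon_2$ of its mean, absorbing the constant $C$ in $\rmds=F(\bsigma)+C$. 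This is the step your sketch leaves implicit; your alternative of a two-sided covariance ball with $\mu=Q_{\mathbf X},\nu_j=Q_{\mathbf Y_j}$ would also pin down those cross-entropy terms (Gaussian cross-entropy depends only on second moments), but you said you would \emph{not} take $\mu=Q_{\mathbf X}$, so as written your proposal is missing this conversion.
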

The proof hinges on our prior result \cite{lccv2015} about the Gaussian optimality in an optimization under a covariance constraint: suppose  $\mu$ and $\nu_j$ are the Lebesgue measures. Define
\begin{align}
F({\bf M})
&:=\sup\left\{-\sum c_j h({\bf Y}_j|U)+h({\bf X}|U)\right\}
\label{e14}
\\
=&\sup
\left\{\sum c_j D(P_{{\bf Y}_j|U}\|\nu_j|P_U)
-D(P_{\mb{X}|U}\|\mu|P_U)\right\}
\label{e15}
\end{align}
where the supremums are over $P_{U\bf X}$ such that $\bsigma_{\bf X}\preceq {\bf M}$.
Also suppose w.l.o.g.~that ${\bf X}\sim \mathcal{N}({\bf 0},\bsigma)$ under $Q_{\bf X}$.
\begin{prop}[\cite{lccv2015}]
\label{prop14}
$F(\mb{M})$ equals the sup in \eqref{e15} restricted to constant $U$ and Gaussian $\mb{X}$, which implies that
\begin{align}
F(\bsigma)+C=
\rmds(Q_{\bf X},Q_{{\bf Y}_j},c^m)
\label{e16}
\end{align}
where
\begin{align}
C:=\sum_j c_j h({\bf Y}_j)-h({\bf X}_j).
\label{e_c}
\end{align}
\end{prop}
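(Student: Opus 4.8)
The plan is to obtain the identity \eqref{e16} from the first (Gaussian-optimality) assertion of the Proposition, and to obtain that assertion by reducing the optimization \eqref{e14}--\eqref{e15} to a finite-dimensional problem over covariance matrices where Gaussians are extremal. I first dispose of \eqref{e16} granting Gaussian optimality. Write the joint law as $Q_{\mathbf X}P_{U|\mathbf X}\prod_j Q_{{\mathbf Y}_j|\mathbf X}$, so that $\mathbf X\sim Q_{\mathbf X}$ and the marginal of each ${\mathbf Y}_j$ is pinned to $Q_{{\mathbf Y}_j}$. Expanding $I(U;\mathbf X)=h(\mathbf X)-h(\mathbf X|U)$ and $I(U;{\mathbf Y}_j)=h({\mathbf Y}_j)-h({\mathbf Y}_j|U)$ and using that $h(\mathbf X)=h(Q_{\mathbf X})$, $h({\mathbf Y}_j)=h(Q_{{\mathbf Y}_j})$ are then constant,
\begin{align}
\rmds(Q_{\mathbf X},Q_{{\mathbf Y}_j},c^m)
=C+\sup\Bigl\{-\sum_j c_j h({\mathbf Y}_j|U)+h(\mathbf X|U)\Bigr\},
\nonumber
\end{align}
the supremum being over $P_{U|\mathbf X}$ with $\mathbf X\sim Q_{\mathbf X}$. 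Any such $P_{U\mathbf X}$ has $\bsigma_{\mathbf X}=\bsigma\preceq\bsigma$, so the right side is $\le F(\bsigma)$; this gives ``$\le$'' in \eqref{e16}. For ``$\ge$'', invoke Gaussian optimality: $F(\bsigma)$ is attained (or approached along a sequence) by a constant $U$ and a centered Gaussian $\mathbf X^{\star}\sim\mathcal N(\mathbf 0,\bsigma^{\star})$ with $\bsigma^{\star}\preceq\bsigma$. Writing the channels as ${\mathbf Y}_j=A_j\mathbf X+\mathbf Z_j$, $\mathbf Z_j\sim\mathcal N(\mathbf 0,\mathbf N_j)$, let $\mathbf W\sim\mathcal N(\mathbf 0,\bsigma-\bsigma^{\star})$ be independent of $\mathbf X^{\star}$ and set $\mathbf X=\mathbf X^{\star}+\mathbf W$, $U=\mathbf W$. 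Then $\mathbf X\sim\mathcal N(\mathbf 0,\bsigma)=Q_{\mathbf X}$, while conditioning on $U$ only shifts means, so $h(\mathbf X|U)=h(\mathbf X^{\star})$ and $h({\mathbf Y}_j|U)=h(A_j\mathbf X^{\star}+\mathbf Z_j)$; this $P_{U|\mathbf X}$ realizes the value $F(\bsigma)$ under the pinning $P_{\mathbf X}=Q_{\mathbf X}$, giving ``$\ge$''. (The equality of \eqref{e14} and \eqref{e15} is immediate since, $\mu$ and $\nu_j$ being Lebesgue, $D(P_{{\mathbf Y}_j|U}\|\nu_j|P_U)=-h({\mathbf Y}_j|U)$ and $D(P_{\mathbf X|U}\|\mu|P_U)=-h(\mathbf X|U)$.)

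It remains to prove the Gaussian-optimality assertion: in \eqref{e15} neither the auxiliary $U$ nor non-Gaussianity of $\mathbf X$ helps beyond a centered Gaussian $\mathbf X$ with $\bsigma_{\mathbf X}\preceq\mathbf M$. One may assume $\mathbb E[\mathbf X\mid U]=\mathbf 0$, since shifting conditional means leaves conditional entropies unchanged and only shrinks $\bsigma_{\mathbf X}$; after this the claim is
\begin{align}
&F(\mathbf M)=\max_{\mathbf 0\preceq\mathbf S\preceq\mathbf M}\psi(\mathbf S),
\nonumber\\
&\psi(\mathbf S):=\tfrac{1}{2}\log\det\mathbf S-\tfrac{1}{2}\sum_j c_j\log\det(A_j\mathbf S A_j^{\top}+\mathbf N_j)+\mathrm{const}.
\nonumber
\end{align}
The inequality ``$\ge$'' is trivial. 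The inequality ``$\le$'' -- that an auxiliary $U$, which after centering merely time-shares zero-mean laws of various covariances $\mathbf S\preceq\mathbf M$, cannot exceed $\max_{\mathbf S\preceq\mathbf M}\psi(\mathbf S)$ -- is the hard point. In the noiseless case $\mathbf N_j=\mathbf 0$ it is the classical fact, recalled in the text, that a Brascamp--Lieb inequality is saturated by Gaussian functions \cite{brascamp1976best}; with genuine noise it is precisely the subadditivity-type result of \cite{lccv2015}, obtained by extending \cite{carlen2009subadditivity}. (One checks easily that $\mathbf M\mapsto F(\mathbf M)$ is concave and non-decreasing, e.g.\ by concatenating feasible instances; but this by itself does not close the gap, because the concave envelope of $\psi$ at $\mathbf M$ could a priori use covariances that are not $\preceq\mathbf M$.)

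I expect this last ``$\le$'' to be the main obstacle, and it is worth recording why the quick ``doubling'' argument fails: replacing two i.i.d.\ copies $(U_i,\mathbf X_i)$ by their $45^{\circ}$ rotation preserves the covariance constraint and, by subadditivity of entropy, raises the branch-averaged $h(\mathbf X|U)$ -- but it raises the branch-averaged $h({\mathbf Y}_j|U)$ by the same token, and these enter $F$ with the opposite sign, so a single rotation need not improve the functional. One must instead iterate (or run the channel inputs through a heat semigroup) and appeal to an entropic central limit theorem, the delicate part being convergence of conditional, non-identically-distributed differential entropies; alternatively one recasts the claim as subadditivity of the best constant $\rmd(\cdot)$ under an appropriate tensorization and then localizes to Gaussians. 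This is the route of \cite{carlen2009subadditivity,lccv2015}, which I would invoke rather than reprove.
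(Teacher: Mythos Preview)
The paper does not actually prove Proposition~\ref{prop14}; it is imported wholesale from \cite{lccv2015}, so there is no ``paper's own proof'' to match against. Your proposal correctly separates the statement into (i) the Gaussian-optimality assertion $F(\mathbf M)=\max_{\mathbf 0\preceq\mathbf S\preceq\mathbf M}\psi(\mathbf S)$ and (ii) the implication \eqref{e16}, and you treat each appropriately.

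Your derivation of \eqref{e16} from (i) is correct. The inequality $\rmds(Q_{\mathbf X},c^m)\le F(\bsigma)+C$ is immediate since the constraint $P_{\mathbf X}=Q_{\mathbf X}$ forces $\bsigma_{\mathbf X}=\bsigma$. For the reverse inequality, your construction $U=\mathbf W\sim\mathcal N(\mathbf 0,\bsigma-\bsigma^{\star})$, $\mathbf X=\mathbf X^{\star}+\mathbf W$ is exactly the right idea: it pins the marginal of $\mathbf X$ to $Q_{\mathbf X}$ while reproducing, conditionally on $U$, the optimal unconditional Gaussian entropies; the Markov chain $U-\mathbf X-\mathbf Y_j$ is respected since the channel noise can be taken independent of $(\mathbf X^{\star},\mathbf W)$. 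This is the standard ``Gaussian superposition'' reduction and is essentially how \cite{lccv2015} passes from the covariance-constrained problem to the mutual-information quantity $\rmds$.

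For (i) you rightly defer to \cite{lccv2015} (and \cite{carlen2009subadditivity}), which is all the present paper does. Your remarks on why a single doubling/rotation step does not settle it are accurate and a nice diagnostic, but not needed here. In short: your proposal is correct, supplies the short implication the paper leaves implicit, and for the substantive part invokes the same reference the paper does.
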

\begin{proof}[Proof of Theorem~\ref{thm_gauss}]
Put $\mathcal{A}$ as the set of unit length vectors in $\mathcal{X}$ (a Euclidean space), and for each $\alpha\in\mathcal{A}$ define
$
\tau_{\alpha}(\mb{x}):=(\alpha^{\top}\bsigma^{-\half}{\bf x})^2-1
$.
Now, observe that for ${\bf x}^n\in\mathcal{X}^n$,
\begin{align}
\frac{1}{n}\sum_i\tau_{\alpha}(\mb{x}_i)
:=\alpha^{\top}\bsigma^{-\half}\left(\frac{1}{n}\sum_i{\bf x}
{\bf x}^{\top}\right)
\bsigma^{-\half}\alpha
-1,
\end{align}
so $\frac{1}{n}\sum_i\tau_{\alpha}(\mb{x}_i)\le\epsilon_1$ for all $\alpha\in\mathcal{A}$ is equivalent to the bound on the empirical covariance:
$
\frac{1}{n}\sum_i{\bf x}
{\bf x}^{\top}
\preceq (1+\epsilon_1)\bsigma
$.
Consider also the ``weakly typical set'' $\mathcal{T}_{\epsilon_2}^n$, defined as the set of sequences $\mb{x}^n$ such that
\begin{align}
&\frac{1}{n}\sum_i\left[\imath_{Q_{\bf X}\|\mu}(\mb{x}_i)-\sum_j c_j\mathbb{E}[\imath_{Q_{\mb{Y}_j}\|\nu_j}(\mb{Y}_j)|\mb{X}=\mb{x}_i]
\right]
\le
C+\epsilon_2
\end{align}
where $C$ was defined in \eqref{e_c}.
Now set $\mu_n$ as the restriction of $Q_{\bf X}^{\otimes n}$ on $\mathcal{S}_{\epsilon_1}^n\cap \mathcal{T}_{\epsilon_2}^n$. If $P_{{\bf X}^n}\ll \mu_n$, by Lemma~\ref{lem_key} we have
\begin{align}
\limsup_{n\to\infty}\frac{1}{n}
\left[\sum_j c_jD(P_{{\bf Y}_j^n}\|\nu_j^{\otimes n})
-D(P_{{\bf X}^n}\|\mu^{\otimes n})\right]
\le F((1+\epsilon_1)\bsigma).
\label{e_opt}
\end{align}
Since $P_{{\bf X}^n}$ is supported on $\mathcal{T}_{\epsilon_2}^n$, we also have
\begin{align}
&\frac{1}{n}
\left[\sum_j c_jD(P_{{\bf Y}_j^n}\|\nu_j^{\otimes n})
-D(P_{{\bf X}^n}\|\mu^{\otimes n})\right]\nonumber
+C
\nonumber
\\
&\ge\frac{1}{n}
\left[\sum_j c_jD(P_{{\bf Y}_j^n}\|Q_{\mb{Y}_j}^{\otimes n})
-D(P_{{\bf X}^n}\|Q_{\bf X}^{\otimes n})\right]-\epsilon_2
\label{e33}
\end{align}
Hence from \eqref{e_opt}-\eqref{e33} we conclude
\begin{align}
&\limsup_{n\to\infty}\frac{1}{n}
\left[\sum_j c_jD(P_{{\bf Y}_j^n}\|Q_{\mb{Y}_j}^{\otimes n})
-D(P_{{\bf X}^n}\|\mu_n)\right]
\nonumber
\\
&\le
F((1+\epsilon_1)\bsigma)+C+\epsilon_2
\label{e34}
\end{align}
where we used $D(P_{{\bf X}^n}\|Q_{\bf X}^{\otimes n})=D(P_{{\bf X}^n}\|\mu_n)$.
Also, by the law of large numbers, $\lim_{n\to\infty} Q_{\bf X}^{\otimes n}(\mathcal{S}_{\epsilon_1}^n\cap \mathcal{T}_{\epsilon_2}^n)=1$ so
$
\lim_{n\to\infty} E_1(Q_{\bf X}^{\otimes n}\|\mu_n)=1
$.
Thus \eqref{e34}, Proposition~\ref{prop14} and the continuity of $F$ (which can be verified since \eqref{e14} is essentially a matrix optimization problem) imply the desired result.
\end{proof}

\section{Converse for the One-Communicator Problem}
\label{sec_onecommunicator}
We prove a single-shot bound connecting smooth GBLL and  one-communicator CR generation~\cite[Theorem~4.2]{ahlswede1998common}, allowing us to prove the converse of one using the achievability of the other.

Let $Q_{XY^m}$ be the joint distribution of sources $X$, $Y_1$, \dots, $Y_m$, observed by terminals ${\sf T}_0$, \dots, ${\sf T}_m$ as shown in Figure~\ref{f_1com}.
The communicator ${\sf T}_0$ computes the integers $W_1(X)$, \dots, $W_m(X)$ and sends them to ${\sf T}_1$, \dots, ${\sf T}_m$, respectively. Then, terminals ${\sf T}_0$, \dots, ${\sf T}_m$ compute integers $K(X)$, $K_1(Y_1,W_1)$,\dots, $K_m(Y_m,W_m)$.
The goal is to produce $K=K_1=\dots=K_m$ with high probability
with $K$ almost equiprobable.

\begin{figure}[h!]
  \centering
\begin{tikzpicture}
[scale=2,
      dot/.style={draw,fill=black,circle,minimum size=0.7mm,inner sep=0pt},arw/.style={->,>=stealth}]
  \node[rectangle,draw,rounded corners] (A) {${\sf T}_1$};
  \node[rectangle,draw,rounded corners] (B) [right= 1.4cm of A] {${\sf T}_2$};
  \node[rectangle] (C) [right =of B] {$\dots$};
  \node[rectangle,draw,rounded corners] (D) [right =of C] {${\sf T}_m$};
  \node[rectangle,draw,rounded corners] (T) [above right=of B, xshift=-13mm, yshift=10mm] {${\sf T}_0$};
  \node[rectangle] (Z) [left=0.4cm of T] {$X$};
  \node[rectangle] (K1) [below =0.4cm of A] {$K_1$};
  \node[rectangle] (K2) [below =0.4cm of B] {$K_2$};
  \node[rectangle] (Km) [below =0.4cm of D] {$K_m$};
  \node[rectangle] (K) [above =0.4cm of T] {$K$};
  \node[rectangle] (X1) [left =0.4cm of A] {$Y_1$};
  \node[rectangle] (X2) [left =0.4cm of B] {$Y_2$};
  \node[rectangle] (Xm) [left =0.4cm of D] {$Y_m$};
 \draw[arw] (Z) to node[]{} (T);
 \draw[arw] (X1) to node[]{} (A);
  \draw[arw] (X2) to node[]{} (B);
   \draw[arw] (Xm) to node[]{} (D);
  \draw [arw] (A) to node[midway,above]{} (K1);
  \draw [arw] (B) to node[midway,above]{} (K2);
  \draw [arw] (D) to node[midway,above]{} (Km);
  \draw [arw] (T) to node[]{} (K);
  \draw [arw,line width=1.5pt] (T) to node[midway,left]{$W_1$} (A.north);
  \draw [arw,line width=1.5pt] (T) to node[midway,left]{$W_2$} (B.north);
  \draw [arw,line width=1.5pt] (T) to node[midway,left]{$W_m$} (D.north);
\end{tikzpicture}
\caption{CR generation with one-communicator}
\label{f_1com}
\end{figure}
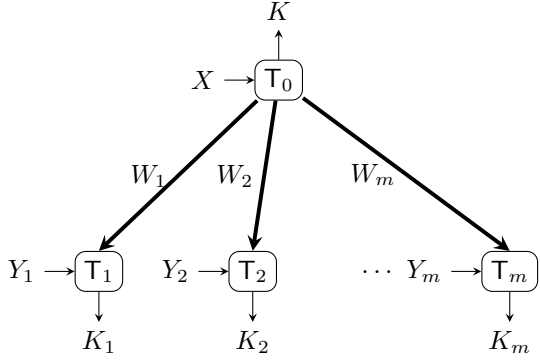
In the stationary memoryless case, put $X\leftarrow X^n$, $Y_j\leftarrow Y_j^n$. Denote by $R$ and $R_j$ the rates of $K$ and $W_j$, respectively.
Under various performance metrics (cf.~\cite{ahlswede1998common}\cite{liu2015key}), the achievable region is the set of $(R,R_1,\dots,R_m)$ such that
\begin{align}
\rmds(Q_X,c^m)+\sum_j c_jR_j\ge \left(\sum_jc_j-1\right)R
\label{e36}
\end{align}
for all $c^m\in(0,\infty)^m$. \footnote{Remark in passing that the corresponding \emph{key} generation problem, which places the additional constraint that $W_j\perp K$ asymptotically for each $j$, is solved in \cite{liu2015key} with a different rate region involving $m+1$ auxiliaries.}
\begin{thm}[Strong converse for one-communicator CR generation]
\label{thm_onecommunicator}
For finite $|\mathcal{X}|$, $|\mathcal{Y}_1|,\dots,|\mathcal{Y}_m|$,
suppose $(R,R_1,\dots,R_m)$ fails \eqref{e36} for some $c^m$.
If $(\delta_1,\delta_2)$ is such that
 \begin{align}
\mathbb{P}[K=K_1=\dots=K_m]
&\ge 1-\delta_1;
\label{e_38}
\\
\frac{1}{2}|Q_K-T_K|
&\le\delta_2
\label{e_39}
\end{align}
can hold for some CR generation scheme at rates $(R,R_1,\dots,R_m)$  for sufficiently large $n$
where $T_K$ is the equiprobable distribution on $\mathcal{K}$,
then
$
\delta_1+\delta_2\ge 1
$.
\end{thm}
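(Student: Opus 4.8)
The plan has two movements: a \emph{single-shot converse} reducing any length-$n$ one-communicator CR code to the finite-blocklength smooth GBLL constant $\rmd(\mu_n,Q_{Y_j}^{\otimes n},c^m)$ for a $\mu_n$ near-optimal for ${\rm d}_{\delta}(Q_X^{\otimes n},Q_{Y_j}^{\otimes n},c^m)$, and then an appeal to the smooth property for finite alphabets (Theorem~\ref{thm_discrete}) to make that constant small. Fix throughout a $c^m\in(0,\infty)^m$ for which $(R,R_1,\dots,R_m)$ violates \eqref{e36}; since $\rmds\ge0$ and $R_j\ge0$ this forces $\sum_j c_j>1$.

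\emph{The single-shot bound.} For a length-$n$ code write $\hat K_j$ for the (possibly stochastic) reconstruction at $\mathsf T_j$ and, for $w_j\in\mathcal W_j$ and $k\in\mathcal K$, let $A_{j,w_j,k}=\{y_j^n:\hat K_j(y_j^n,w_j)=k\}$ (or the associated kernel). Two facts drive the estimate. First, for fixed $w_j$ the family $(A_{j,w_j,k})_k$ partitions $\mathcal Y_j^n$, so $\sum_k Q_{Y_j}^{\otimes n}(A_{j,w_j,k})=1$; hence a decoder that succeeds with probability $\ge1-\delta_1$ forces, for a typical $k$, the conditional law $P_{Y_j^n\mid K=k}$ to sit almost entirely on a set of $Q_{Y_j}^{\otimes n}$-mass of order $|\mathcal W_j|/|\mathcal K|$, whence by data processing $D(P_{Y_j^n\mid K=k}\|Q_{Y_j}^{\otimes n})\ge\log|\mathcal K|-\log|\mathcal W_j|-o(n)$. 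Second, near-uniformity of $K$, through $\sum_k(Q_K(k)-\gamma/|\mathcal K|)^+\le E_1(Q_K\|T_K)=\tfrac12|Q_K-T_K|\le\delta_2$ for $\gamma\ge1$ \cite{liu2015_egamma_arxiv}, lets one trade $Q_K$-averages for uniform averages over $\mathcal K$ at additive cost $O(\delta_2)$, so the preceding estimate holds for $K$ in a set of $Q_K$-probability $\ge1-\delta_1-\delta_2-o(1)$. Feeding this into the variational formula for $\rmd(\cdot)$ of Definition~\ref{defn1} --- equivalently into the GBLL \eqref{e_func} via Proposition~\ref{prop_func}, with $\mu\leftarrow\mu_n$, $\nu_j\leftarrow Q_{Y_j}^{\otimes n}$, $Q_{Y_j|X}\leftarrow Q_{Y_j|X}^{\otimes n}$, \emph{smoothed} (not indicator) test functions, and the conditional form of $\rmd(\cdot)$ with auxiliary $U\leftarrow(K,W^m)$ --- and changing measure $Q_X^{\otimes n}\to\mu_n$ via the $E_\gamma$ inequality (at cost $E_1(Q_X^{\otimes n}\|\mu_n)\le\delta$), the target is
\[
1-\delta_1-\delta_2-\delta-o(1)\ \le\ \exp\!\Big(\rmd(\mu_n,Q_{Y_j}^{\otimes n},c^m)+\textstyle\sum_j c_j\log|\mathcal W_j|-\big(\sum_j c_j-1\big)\log|\mathcal K|\Big),
\]
where $\log|\mathcal K|=nR$ and $\log|\mathcal W_j|=nR_j$, and the $o(1)$ above absorbs Fano-type and typicality remainders that vanish because the alphabets are finite (e.g.\ by Sanov/Chernoff bounds on the complement of the typical set).

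\emph{Making the constant small.} By Theorem~\ref{thm_discrete}, ${\rm D}_{0^+}(Q_X,Q_{Y_j},c^m)\le\rmds(Q_X,c^m)$, and since $\delta\mapsto{\rm D}_{\delta}$ is non-increasing, ${\rm D}_{\delta}(Q_X,Q_{Y_j},c^m)\le\rmds(Q_X,c^m)$ for every $\delta\in(0,1)$. So for each such $\delta$ and every $\eta>0$ there are $\mu_n$ with $E_1(Q_X^{\otimes n}\|\mu_n)\le\delta$ and $\tfrac1n\rmd(\mu_n,Q_{Y_j}^{\otimes n},c^m)\le\rmds(Q_X,c^m)+\eta$ for all large $n$. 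For these, the exponent above equals $n\big(\rmds(Q_X,c^m)+\eta+\sum_j c_jR_j-(\sum_j c_j-1)R\big)+o(n)$; since $(R,R_1,\dots,R_m)$ fails \eqref{e36} for $c^m$, the parenthesis is strictly negative once $\eta$ is small, so the right-hand side tends to $0$. Hence $1-\delta_1-\delta_2-\delta\le0$ for every $\delta\in(0,1)$, and letting $\delta\downarrow0$ gives $\delta_1+\delta_2\ge1$.

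\emph{Where the difficulty lies.} The single-shot converse is the substantive step. Two points need care: (a) the GBLL carries the factor $\exp(\mathbb E[\log f_j(Y_j^n)\mid X^n])$, so the obvious choice $f_j=\mathbf 1_{A_{j,w_j,k}}$ collapses it to $\{0,1\}$ and is useless --- one must use a soft surrogate while still respecting the geometric-mean structure of \eqref{e_func} when $\sum_j c_j>1$; and (b) both the error budget $\delta_1$ (through Fano) and the non-uniformity budget $\delta_2$ (through the entropy and $E_\gamma$ control of $Q_K$) must reach the left-hand side \emph{with the sharp constant $1$}, which is precisely what the $E_\gamma$ bookkeeping together with the partition identity $\sum_k Q_{Y_j}^{\otimes n}(A_{j,w_j,k})=1$ accomplish. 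The remaining ingredients --- conditional tensorization of $\rmd(\cdot)$ \cite{lccv2015}, the choice of typical set underlying $\mu_n$, and the interchange of limits --- are routine in the finite-alphabet setting.
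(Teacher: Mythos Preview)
Your plan conflates the omniscient-helper argument (Theorem~\ref{thm_oneshot}) with the general one-communicator case, and the gap is precisely the single-shot step you flag in point~(a). The displayed target inequality
\[
1-\delta_1-\delta_2-\delta-o(1)\ \le\ \exp\!\Big(\rmd(\mu_n,Q_{Y_j}^{\otimes n},c^m)+\textstyle\sum_j c_j\log|\mathcal W_j|-\big(\sum_j c_j-1\big)\log|\mathcal K|\Big)
\]
is exactly the BLL-with-indicators bound, and it is \emph{not} available when $Q_{Y_j|X}$ is stochastic. The paper's own Remark~\ref{rem15} gives the natural ``soft surrogate'' $f(y)=(1_{\mathcal{A}}(y)+Q_Y(\mathcal{A})1_{\bar{\mathcal{A}}}(y))^c$, and that choice only yields \eqref{e39} with $\epsilon=\epsilon'$; but Lemma~\ref{lem_connect} needs $\epsilon'$ arbitrarily close to $1$ (since in the proof $\delta_4\uparrow 1$ and one must take $\epsilon'>\delta_4$) while $\epsilon$ must stay small so that the exponent $\frac{d}{c(1-\epsilon)}+R_1-\big(1-\frac{1}{c(1-\epsilon)}\big)R$ is negative. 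The paper resolves this by invoking the \emph{blowing-up lemma} (Remark~\ref{rem_16}), which decouples $\epsilon$ from $\epsilon'$ at the cost of restricting to finite alphabets and deterministic decoders; Theorem~\ref{thm_discrete} alone does not suffice, and the paper says so explicitly at the start of Section~V.

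Your alternative route through the variational form of $\rmd(\cdot)$ with $U=(K,W^m)$ also falls short of a \emph{strong} converse. The lower bound on $D(P_{Y_j^n|KW_j}\|Q_{Y_j}^{\otimes n}|P_{KW_j})=I(KW_j;Y_j^n)$ via the decoding sets goes through Fano and loses $\delta_1\log|\mathcal{K}|=\delta_1\,nR$, which is $O(n)$, not $o(n)$; similarly the $\delta_2$ loss enters as $D(Q_K\|T_K)$ rather than with the sharp additive constant $\delta_2$. So this path proves only the weak converse (\eqref{e36} for vanishing error), not $\delta_1+\delta_2\ge 1$. In short: the ``soft surrogate'' you invoke does not exist at the strength you need, and the missing ingredient is the blowing-up lemma feeding Lemma~\ref{lem_connect}, not a different choice of $f_j$ in \eqref{e_func}.
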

The following lemma establishes a \emph{single-shot} connection between one-communicator CR generation and smooth GBLL, which allows us to prove the converse of one problem from the achievability of the other.
For simplicity of presentation, we state it in the case of $m=1$.\footnote{Note that this problem is unlike the usual ``image-size characterization'' \cite[Chapter~15]{csiszar2011information} which is difficult to generalize to $m\ge 3$ case.}
\begin{lem}\label{lem_connect}
Suppose that there exist $\delta_1,\delta_2\in(0,1)$,
a stochastic encoder $Q_{W|X}$, and deterministic decoders $Q_{K|X}$ and $Q_{\hat{K}|WY}$, such that \eqref{e_38} and \eqref{e_39} hold.
Also, suppose that there exist $\mu_X$, $\delta,\epsilon,\epsilon'\in(0,1)$ and $c,d\in(0,\infty)$ such that
\begin{align}
E_1(Q_X\|\mu_X)
&\le \delta;
\label{e_neighbor}
\\
\mu_X\left(x\colon Q_{Y|X=x}(\mathcal{A})\ge 1-\epsilon'\right)
&\le
2^c\exp(d) Q_Y^{c(1-\epsilon)}(\mathcal{A})
\label{e39}
\end{align}
for any $\mathcal{A}\subseteq \mathcal{Y}$.
Then, for any $\delta_3,\delta_4\in(0,1)$ such that $\delta_3\delta_4=\delta_1+\delta$, we have
\begin{align}
\delta_2\ge 1-\delta-\delta_3-\frac{1}{|\mathcal{K}|}
-\frac{2^{\frac{1}{1-\epsilon}}\exp\left(\frac{d}{c(1-\epsilon)}\right)
|\mathcal{W}|}{(\epsilon'-\delta_4)
^{\frac{1}{c(1-\epsilon)}}|\mathcal{K}|^{1-\frac{1}{c(1-\epsilon)}}}.
\label{e_41}
\end{align}
\end{lem}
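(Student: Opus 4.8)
The plan is to bound $\delta_2 = \frac12|Q_K - T_K|$ from below by exhibiting a test set $\mathcal{A}_k \subseteq \mathcal{Y}$ for each $k\in\mathcal{K}$ on which $Q_{K}$ is forced to be large while $T_K$ is small, and then converting \eqref{e39} (the single-shot GBLL-type estimate, which in view of Proposition~\ref{prop_func} / the $E_\gamma$ machinery is precisely the statement that $d_\delta$ is finite for the relevant parameters) into a bound on the probability that the decoders at the two ends agree. First I would set up the coupling: since $E_1(Q_X\|\mu_X)\le\delta$, events measured under $Q_X$ and under $\mu_X$ differ by at most $\delta$ in probability; combined with \eqref{e_38}, under $\mu_X$ the decoders still agree, $\hat K = K$, except on a set of $\mu_X$-mass at most $\delta_1+\delta$. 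Using $\delta_3\delta_4 = \delta_1+\delta$ and a Markov/averaging argument, I would split this into: for all but a $\mu_X$-fraction $\delta_3$ of $x$, the conditional (over $Q_{Y|X=x}$) probability that the decoder $Q_{\hat K|WY}$ reproduces $K(x)$ is at least $1-\delta_4$.

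Next, fix $k$ and let $\mathcal{A}_k^{(w)} := \{y : Q_{\hat K|W=w,Y=y} \text{ decodes to } k\}$. Averaging over the (at most $|\mathcal{W}|$) values of $w$ and the previous step, one gets: conditioned on $K(x)=k$ for a "good" $x$, $Q_{Y|X=x}$ places mass $\ge (1-\delta_4)/|\mathcal{W}|$ on some single $\mathcal{A}_k^{(w)}$, hence mass $\ge \epsilon'$ on it provided we arrange the counting so that $\epsilon' - \delta_4 > 0$ appears as the slack — this is why the hypothesis carries $\epsilon' - \delta_4$ in \eqref{e_41}. Now invoke \eqref{e39} with $\mathcal{A} = \mathcal{A}_k^{(w)}$: the $\mu_X$-mass of $x$'s for which $Q_{Y|X=x}(\mathcal{A}_k^{(w)})\ge 1-\epsilon'$ is at most $2^c\exp(d)\,Q_Y^{c(1-\epsilon)}(\mathcal{A}_k^{(w)})$. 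Summing the bound over $k$ and over $w$, using that the $\mathcal{A}_k^{(w)}$ for fixed $w$ partition $\mathcal{Y}$, and applying the power-mean (concavity of $t\mapsto t^{c(1-\epsilon)}$ when that exponent is $\le 1$, or Hölder otherwise) to $\sum_k Q_Y(\mathcal{A}_k^{(w)})^{c(1-\epsilon)} \le |\mathcal{K}|^{1-c(1-\epsilon)}$, yields exactly the last term in \eqref{e_41}. This shows that the total $\mu_X$-probability that $K$ lands in a value $k$ whose decoding region is "too concentrated" is controlled by that explicit quantity; equivalently, $Q_K$ is spread out: for most $k$, $Q_K(k)$ is not much larger than what $T_K$ would give, up to the $1/|\mathcal{K}|$ slack and the error terms.

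Finally I would assemble the pieces into the total-variation lower bound. Writing $\frac12|Q_K - T_K| = \sum_{k: Q_K(k) > 1/|\mathcal{K}|}(Q_K(k) - 1/|\mathcal{K}|)$, I argue that the set of $k$ with $Q_K(k)$ large has small total $Q_K$-mass — this is where the GBLL estimate, transported back from $\mu_X$ to $Q_X$ at a cost $\delta$, enters — so $Q_K$ must put mass at least $1 - \delta - \delta_3 - (\text{last term})$ on values that are individually small (of order $1/|\mathcal{K}|$ or decoded "spread"), and for those values $T_K$ absorbs at most $1/|\mathcal{K}|$ total; collecting constants gives \eqref{e_41}. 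The main obstacle I anticipate is the bookkeeping in the second step: correctly propagating the "decoders agree with probability $1-\delta_4$" guarantee through the union over $\mathcal{W}$ and over $\mathcal{K}$ so that the $\mathcal{A}_k^{(w)}$ to which \eqref{e39} is applied genuinely has $Q_{Y|X}$-mass $\ge 1-\epsilon'$ failing — i.e. getting the quantifiers on $\epsilon'$, $\delta_4$, and the "good-$x$" set to line up — and then choosing the split $\delta_3\delta_4 = \delta_1+\delta$ optimally; the convexity/counting step bounding $\sum_k Q_Y(\mathcal{A}_k)^{c(1-\epsilon)}$ is routine, and the $E_1$ transport is a one-line application of the definition of $E_\gamma$.
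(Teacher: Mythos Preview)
Your outline has the right skeleton---transport to $\mu_X$, a Markov split, feed decoding regions into \eqref{e39}, sum, convert to total variation---but two middle steps are misapplied and the argument as written does not close. The set fed into \eqref{e39} must be the union $\bigcup_{w}\mathcal{A}_{kw}$, not a single $\mathcal{A}_{k}^{(w)}$, and the Markov split should be over $k$ rather than $x$. The paper sets $\mathcal{J}:=\{k:\mu_{\hat K|K}(k|k)\ge 1-\delta_4\}$ (so $\mu_K(\mathcal{J})\ge 1-\delta_3$), and for $k\in\mathcal{J}$ bounds
$(1-\delta_4)\,\mu_K(k)\le\mu_{K\hat K}(k,k)\le\int_{\mathcal{F}_k}Q_{Y|X=x}\bigl(\bigcup_{w}\mathcal{A}_{kw}\bigr)\,{\rm d}\mu_X(x)$,
then splits the integrand according to whether it exceeds $1-\epsilon'$: the ``small'' part contributes at most $(1-\epsilon')\mu_K(k)$ and the ``large'' part is controlled by \eqref{e39}, giving $(\epsilon'-\delta_4)\,\mu_K(k)\le 2^{c}\exp(d)\,Q_Y\bigl(\bigcup_{w}\mathcal{A}_{kw}\bigr)^{c(1-\epsilon)}$. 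That is the true origin of the factor $\epsilon'-\delta_4$. Your pigeonhole over $w$ never produces the threshold $\ge 1-\epsilon'$ required by \eqref{e39} on an individual $\mathcal{A}_{k}^{(w)}$ (the quantity $(1-\delta_4)/|\mathcal{W}|$ bears no relation to $1-\epsilon'$), and doing Markov over $x$ leaves you without the per-$k$ inequality needed next.

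The summation then goes in the opposite order from what you propose: one raises the last display to the power $1/(c(1-\epsilon))$ \emph{first}, making the right side linear in $Q_Y$, and \emph{then} sums over $k\in\mathcal{J}$; the union bound together with $\sum_k Q_Y(\mathcal{A}_{kw})=1$ for each $w$ gives $\sum_k Q_Y\bigl(\bigcup_{w}\mathcal{A}_{kw}\bigr)\le|\mathcal{W}|$, which is how $|\mathcal{W}|$ enters linearly. Your proposed bound $\sum_k Q_Y(\mathcal{A}_k^{(w)})^{c(1-\epsilon)}\le|\mathcal{K}|^{1-c(1-\epsilon)}$ is reversed in the only regime that matters ($c(1-\epsilon)>1$, else \eqref{e_41} is vacuous): by Jensen, $\sum_k a_k^{c(1-\epsilon)}\ge|\mathcal{K}|^{1-c(1-\epsilon)}$ whenever $\sum_k a_k=1$. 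What the correct summation produces is a lower bound on the R\'enyi divergence $D_{1/(c(1-\epsilon))}(\tilde\mu\|T_K)$ with $\tilde\mu:=\mu_K|_{\mathcal{J}}$; the paper then converts this to $E_1(T_K\|\tilde\mu)$ via a separate R\'enyi-to-$E_1$ inequality (Proposition~\ref{prop_TVRenyi}) and combines with $E_1(Q_K\|\tilde\mu)\le\delta+\delta_3$ to obtain \eqref{e_41}. Your final total-variation paragraph is only heuristic and should be replaced by this conversion.
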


\begin{rem}\label{rem15}
The relevance of the Lemma~\ref{lem_connect} to smooth GBLL is seen by setting
$
f(y):=(1_{\mathcal{A}}(y)+ Q_Y(\mathcal{A}) 1_{\mathcal{\bar{A}}}(y))^c
$
in \eqref{e_func}. We then see \eqref{e39} holds for any $\epsilon=\epsilon'\in(0,1)$.
\end{rem}
\begin{rem}\label{rem_16}
In the stationary memoryless case $Q_X\leftarrow Q_X^{\otimes n}$, $Q_{Y|X}\leftarrow Q_{Y|X}^{\otimes n}$, suppose $|\mathcal{X}|,|\mathcal{Y}|<\infty$.
Using the blowing-up lemma \cite{ahlswede1976bounds}, we can show that
for any $\delta,\epsilon,\epsilon'\in(0,1)$ and $d>\rmds(Q_X,c)$, there exists $n$ large enough such that \eqref{e39} is satisfied with $d\leftarrow nd$ for some $\mu_X$ (more precisely, the restriction of $Q_X^{\otimes n}$ on a strongly typical set) satisfying \eqref{e_neighbor}.
\end{rem}
\begin{proof}[Proof of Theorem~\ref{thm_onecommunicator}]
Again consider $m=1$ case for simplicity.
Suppose that $(R,R_1)$ is such that
\eqref{e36} fails for some $c>0$. Then, there is $\epsilon\in(0,1)$ and $d>\rmds(Q_X,c)$ such that
\eqref{e_54} does not hold.
If we choose $\delta>0$ arbitrarily small, then $\delta_3$ can be made arbitrarily close to $\delta_1$, in which case $\delta_4$ is forced to be close to $1$. Pick $\epsilon'>\delta_4$. These choices combined with Remark~\ref{rem_16}, Theorem~\ref{thm_discrete} and \eqref{e_41}, show that
$
\delta_1+\delta_2\ge 1
$.
\end{proof}
Another application of Lemma~\ref{lem_connect} is the following:
\begin{thm}[Weak converse for smooth GBLL]
\label{thm_weak}
\begin{align}
{\rm D}_{0^+}(Q_X,Q_{Y_j},c^m)
\ge\rmds(Q_X,c^m)
\label{e52}
\end{align}
\end{thm}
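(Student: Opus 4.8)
The plan is to derive \eqref{e52} as a consequence of the single-shot connection in Lemma~\ref{lem_connect}, exploiting the known \emph{achievability} of one-communicator CR generation. Suppose, for contradiction, that $\mathrm{D}_{0^+}(Q_X,Q_{Y_j},c^m)<\rmds(Q_X,c^m)$ for some $c^m$. Then there is a $\delta>0$ and a $d$ with $\mathrm{D}_{\delta}(Q_X,Q_{Y_j},c^m)<d<\rmds(Q_X,c^m)$, so for all large $n$ there exists a $\mu_{X^n}$ with $E_1(Q_X^{\otimes n}\|\mu_{X^n})\le\delta$ and $\rmd(\mu_{X^n},\nu_j^{\otimes n},c^m)\le nd$. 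By Proposition~\ref{prop_func}, the GBLL inequality \eqref{e_func} holds with constant $nd$, and applying it to the test functions from Remark~\ref{rem15} (i.e.\ $f_j(y^n)=(1_{\mathcal{A}_j}(y^n)+Q_{Y_j}^{\otimes n}(\mathcal{A}_j)1_{\bar{\mathcal{A}}_j}(y^n))^{c_j}$) turns \eqref{e_func} into the hypothesis \eqref{e39} of Lemma~\ref{lem_connect}, with $d\leftarrow nd$ and any $\epsilon=\epsilon'\in(0,1)$.

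Next I would invoke the achievability of CR generation. Since $d<\rmds(Q_X,c^m)$, one can pick rates $(R,R_1,\dots,R_m)$ that \emph{satisfy} \eqref{e36} (strictly) for the given $c^m$ but for which the slack over $\rmds$ is taken up by $d$; more precisely, choose $(R,R_j)$ so that $d+\sum_j c_j R_j<(\sum_j c_j-1)R$. By the known achievability result \cite[Theorem~4.2]{ahlswede1998common} (in the single-shot/blow-up form used above), for every $n$ there is a scheme with vanishing $\delta_1=\delta_1(n)\to0$ and $\delta_2=\delta_2(n)\to0$ achieving these rates, with $|\mathcal{W}|\doteq 2^{nR_j}$ and $|\mathcal{K}|\doteq 2^{nR}$. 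Feeding these parameters into the bound \eqref{e_41} of Lemma~\ref{lem_connect}: the exponent of $|\mathcal{K}|$ in the last term is $1-\frac{1}{c(1-\epsilon)}$ multiplied by $nR$, while the numerator contributes $\exp(\frac{nd}{c(1-\epsilon)})$ and $|\mathcal{W}|=2^{n\sum R_j}$; collecting exponents, the last term behaves like $\exp\big(n[\frac{d}{c(1-\epsilon)}+\sum_j R_j - (1-\frac{1}{c(1-\epsilon)})R]\big)$ up to the $(\epsilon'-\delta_4)$ factor, and by the choice of rates (and taking $\epsilon$ small, so $c(1-\epsilon)$ close to $c$, which one normalizes appropriately across the $m$ constraints) this exponent is negative. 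Hence the right side of \eqref{e_41} tends to $1-\delta-\delta_3$ as $n\to\infty$; since $\delta$ is arbitrarily small and $\delta_3$ can be taken close to $\delta_1+\delta\to0$, we would get $\liminf_n \delta_2(n)\ge 1-o(1)$, contradicting $\delta_2(n)\to0$. This contradiction establishes \eqref{e52}.

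The main obstacle is bookkeeping the multi-constraint ($m\ge2$) version of Lemma~\ref{lem_connect} and matching the exponents: the lemma is stated for $m=1$, so I would either first record its straightforward generalization (replacing $2^c\exp(d)Q_Y^{c(1-\epsilon)}$ by $\prod_j 2^{c_j}\exp(d)Q_{Y_j}^{c_j(1-\epsilon)}$ and the $\frac{1}{c(1-\epsilon)}$ exponents by $\frac{1}{\sum_j c_j(1-\epsilon)}$, with $|\mathcal{W}|\leftarrow\prod_j|\mathcal{W}_j|$), or reduce the $m$-terminal scheme to an effective single-communicator bound. One must check the degenerate cases $\sum_j c_j\le 1$ (where \eqref{e36} imposes no real constraint and there is nothing to prove) and handle the normalization of $c^m$ so that the slack in \eqref{e36} translates cleanly into a negative exponent in \eqref{e_41}. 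A secondary subtlety is ensuring the achievability scheme can be taken with deterministic decoders and with $K$ close to equiprobable in total variation, as required by the hypotheses \eqref{e_38}--\eqref{e_39} of Lemma~\ref{lem_connect}; this is standard but should be cited carefully. Everything else — the limiting behaviour of $\delta_1,\delta_3,\delta_4$ and the continuity in $\epsilon,\epsilon',\delta$ — is routine.
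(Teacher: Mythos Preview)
Your proposal is correct and follows essentially the same route as the paper: both combine Lemma~\ref{lem_connect} with Remark~\ref{rem15} (to pass from the GBLL constant to the image-size condition \eqref{e39}) and then invoke the achievability of one-communicator CR generation to force the last term in \eqref{e_41} to vanish, with the paper likewise restricting to $m=1$ for simplicity. The only cosmetic difference is that the paper argues directly (showing \eqref{e_54} for every achievable $(R,R_1)$ and every $d>{\rm D}_{0^+}$) whereas you phrase the same computation as a proof by contradiction.
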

\begin{proof}
For simplicity, we prove for the case of $m=1$. For any $d>{\rm D}_{0^+}(Q_X,Q_Y,c)$ (achievable rate for smooth GBLL)
and any $(R,R_1)$ achievable for one-communicator CR generation,
we show that
\begin{align}
\frac{d}{c(1-\epsilon)}+R_1>R\left(1-\frac{1}{c(1-\epsilon)}\right)
\label{e_54}
\end{align}
for any $\epsilon\in(0,1)$, which will establish \eqref{e52} because of the achievable region formula \eqref{e36}.

We can choose $\delta,\delta_1,\delta_2,\delta_3,\delta_4$ such that
$\delta_2< 1-\delta-\delta_3$
and $\delta_4<\epsilon$.
For large $n$, \eqref{e_38} and \eqref{e_39} can be satisfied, and by Remark~\ref{rem15}, for $\epsilon'=\epsilon$, we can find $\mu_X$ satisfying \eqref{e_neighbor} and \eqref{e39} with $Q_X\leftarrow Q_X^{\otimes n}$, $Q_{Y|X}\leftarrow Q_{Y|X}^{\otimes n}$ and $d\leftarrow nd$.
Thus \eqref{e_54} holds because the last term in \eqref{e_41} must vanish as $n\to\infty$.
\end{proof}

\section{Converse for the Omniscient Helper Problem}
Note that Theorem~\ref{thm_weak} only establishes a weak converse for smooth GBLL and Theorem~\ref{thm_onecommunicator} is only for finite alphabets and deterministic decoders, because of the use of the blowing-up lemma. In this section we improve these results in a special case where $X=(Y_1,\dots,Y_m)$, that is, in the special case of smooth BLL and \emph{omniscient helper} CR generation.

To see why the problem becomes simpler in this special case, note that the set $\{x\colon Q_{Y|X=x}(\mathcal{A})\ge 1-\epsilon'\}$ in \eqref{e39} can be regarded as the ``preimage''
of the set $\mathcal{A}$ under the random transformation. In the case of deterministic $Q_{Y_j|X}$, there is no difference regarding the choice of $\epsilon'\in(0,1)$. However, in general a large $\epsilon'$ may imply a large $\epsilon$ on the right side of \eqref{e39}. Nevertheless, under the conditions for the blowing-up lemma, $\epsilon'$ and $\epsilon$ can be chosen independently (Remark~\ref{rem_16}).

In our prior work \cite{liu2015key}, a single-shot bound was derived via hypercontractivity which shows the strong converse property of the secret key (or CR) per unit cost.
From the current perspective, no smoothing is needed for that particular $c^m$ (which can be viewed as the orientation of the supporting hyperplane) for the reason explained in Section~\ref{sec_hypercontractivity}.
Straightforward extensions of the analysis from hypercontractivity to BLL inequality yields only a loose outer bound for the rate region when $\rmd(Q_X,Q_{Y_j},c^m)>\rmds(Q_X,c^m)$.
However, following the philosophy in the present paper, we may choose $\mu$ which is $E_1$-close to $Q_X$ and expect that $\rmd(\mu,Q_{Y_j},c^m)\approx\rmds(Q_X,c^m)$.
Thus by a slight change of the analysis in \cite{liu2015key}, we can show the following.
\begin{thm}[single-shot converse for omniscient helper CR generation]
\label{thm_oneshot}
If $d\ge \rmd(\mu, Q_{Y_j},c^m)$ for some $\mu$ satisfying
$
E_1(Q_{Y^m}\|\mu)\le\delta
$,
then
{\small
\begin{align}
\frac{1}{2}|Q_{K^m}-T_{K^m}|
\ge
1-\frac{1}{|\mathcal{K}|}
-\frac{\prod_{l=1}^m |\mathcal{W}_l|^{\frac{c_l}{\sum c_i}}}
{|\mathcal{K}|^{1-\frac{1}{\sum c_i}}}
\exp\left(\frac{d}{\sum c_i}\right)
-\delta.
\label{e38}
\end{align}
}
where $T_{K^m}(k^m):=\frac{1}{|\mathcal{K}|}1\{k_1=\dots=k_m\}$.
\end{thm}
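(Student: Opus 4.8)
The plan is to port the hypercontractivity-based single-shot converse of \cite{liu2015key} almost verbatim, the one change being to invoke the BLL inequality against the auxiliary measure $\mu$ rather than the source law $Q_{Y^m}$. Write $s:=\sum_i c_i$, and fix a scheme with encoder $Q_{W^m|X}$ (take it deterministic for now) and deterministic decoders $K_l=\kappa_l(Y_l,W_l)$. Set $q_k:=\mathbb{P}[K_1=\dots=K_m=k]=Q_{K^m}((k,\dots,k))$ and $\mathcal{S}_k:=\{x\in\mathcal{X}:K_l(x)=k\;\forall l\}$; the $\mathcal{S}_k$ are pairwise disjoint. Since $T_{K^m}$ is supported on the diagonal, the exact total variation identity $\frac12|Q_{K^m}-T_{K^m}|=1-\sum_k\min(q_k,1/|\mathcal{K}|)$ holds, so the whole job is to upper bound $\sum_k\min(q_k,1/|\mathcal{K}|)$.

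The key device is to charge the message rates to the BLL constant by augmenting the outputs. Because $X=(Y_1,\dots,Y_m)$, each $Y_l=\pi_l(X)$ is a projection; I would form $\tilde{Y}_l:=(Y_l,W_l)$, still a deterministic function of $X$, and equip $\mathcal{Y}_l\times\mathcal{W}_l$ with the reference measure $\tilde{\nu}_l:=Q_{Y_l}\otimes\upsilon_l$, where $\upsilon_l$ is uniform on $\mathcal{W}_l$. The chain rule $D(P_{Y_lW_l}\|\tilde{\nu}_l)=D(P_{Y_l}\|Q_{Y_l})+\log|\mathcal{W}_l|-H(W_l|Y_l)$ together with $H(W_l|Y_l)\ge0$ gives, directly from Definition~\ref{defn1}, $\rmd(\mu,(Q_{\tilde{Y}_l|X}),(\tilde{\nu}_l),c^m)\le\rmd(\mu,Q_{Y_j},c^m)+\sum_l c_l\log|\mathcal{W}_l|\le d+\log M$, with $M:=\prod_l|\mathcal{W}_l|^{c_l}$. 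Applying Proposition~\ref{prop_func} for this augmented problem to the indicators $f_l=1_{\mathcal{D}_l(k)}$, $\mathcal{D}_l(k):=\{(y_l,w_l):\kappa_l(y_l,w_l)=k\}$, the set form of \eqref{e_func} yields $\mu(\mathcal{S}_k)\le e^dM\prod_l\tilde{\nu}_l(\mathcal{D}_l(k))^{c_l}$. (A stochastic encoder is handled by lifting $\mu$ to $\mu\otimes Q_{W^m|X}$ on $\mathcal{X}\times\mathcal{W}^m$ and using the data-processing inequality $E_1(Q_XQ_{W^m|X}\|\mu Q_{W^m|X})\le E_1(Q_X\|\mu)$, after which the same steps go through on the enlarged space.)

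Then I would desmooth: since the $\mathcal{S}_k$ are disjoint, $\sum_k(Q_X(\mathcal{S}_k)-\mu(\mathcal{S}_k))^+\le E_1(Q_X\|\mu)\le\delta$, so $q_k\le\mu(\mathcal{S}_k)+\Delta_k^+$ with $\sum_k\Delta_k^+\le\delta$, whence $\min(q_k,1/|\mathcal{K}|)\le\min(\mu(\mathcal{S}_k),1/|\mathcal{K}|)+\Delta_k^+$. For the remaining sum I would use $\min(a,b)\le a^{1/s}b^{1-1/s}$ (valid since $s\ge1$ in the regime where \eqref{e36} is not vacuous), then the bound on $\mu(\mathcal{S}_k)$, the weighted AM--GM inequality $\prod_l\tilde{\nu}_l(\mathcal{D}_l(k))^{c_l/s}\le\sum_l\frac{c_l}{s}\tilde{\nu}_l(\mathcal{D}_l(k))$, and the partition identity $\sum_k\tilde{\nu}_l(\mathcal{D}_l(k))=\tilde{\nu}_l(\mathcal{Y}_l\times\mathcal{W}_l)=1$, which together give $\sum_k\min(\mu(\mathcal{S}_k),1/|\mathcal{K}|)\le(e^dM)^{1/s}|\mathcal{K}|^{-(1-1/s)}\sum_k\prod_l\tilde{\nu}_l(\mathcal{D}_l(k))^{c_l/s}\le(e^dM)^{1/s}|\mathcal{K}|^{-(1-1/s)}$. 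Substituting $M=\prod_l|\mathcal{W}_l|^{c_l}$ recovers the middle term of \eqref{e38}; the extra slack $1/|\mathcal{K}|$ in the statement is what one gets by bounding the single largest diagonal term $\min(\mu(\mathcal{S}_{k^\star}),1/|\mathcal{K}|)$ crudely by $1/|\mathcal{K}|$, and is harmless.

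The hard part will be the augmentation step: recognizing that appending $W_l$ to $Y_l$ with a \emph{uniform} reference measure on $\mathcal{W}_l$ is exactly the move that converts the message-rate budget into the additive shift $\sum_l c_l\log|\mathcal{W}_l|$ of the BLL constant, so that the already-established achievability of smooth BLL (i.e.\ $d\approx\rmds(Q_X,c^m)$ for a suitable $E_1$-neighbour $\mu$ of $Q_X$) can be inserted to close the argument. A secondary but essential point is the bookkeeping in the desmoothing step — the cost of replacing $Q_X$ by $\mu$ must be made to enter the bound only once, via disjointness of the $\mathcal{S}_k$, rather than once per key value, which would be fatal.
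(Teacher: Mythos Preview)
Your argument is correct and in fact delivers a bound tighter by $1/|\mathcal{K}|$ than \eqref{e38}. The route, however, differs from the paper's in how the message cardinalities enter. The paper applies the BLL inequality directly on $\mathcal{Y}_j$ with $f_j=1_{\cup_{w_j}\mathcal{A}^j_{kw_j}}$ (deterministic decoders) or, in general, $f_j(y_j)=\max_{w}Q_{K_j|W_jY_j}(k|w,y_j)$, so that $|\mathcal{W}_j|$ appears only at the summation stage through $\sum_k Q_{Y_j}(\cup_{w_j}\mathcal{A}^j_{kw_j})\le|\mathcal{W}_j|$. You instead fold the message into the output and put a uniform reference on it, so that $\prod_l|\mathcal{W}_l|^{c_l}$ shows up upfront as an additive shift of the BLL constant and the partition identity $\sum_k\tilde{\nu}_l(\mathcal{D}_l(k))=1$ closes the sum cleanly. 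After the $\min(a,b)\le a^{1/s}b^{1-1/s}$ and weighted AM--GM steps (these are the ``manipulations'' the paper alludes to) the two devices produce the same estimate, and your desmoothing via disjointness of the $\mathcal{S}_k$ is exactly the right bookkeeping.

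One point you leave open: the paper emphasizes that Theorem~\ref{thm_oneshot} covers \emph{stochastic decoders}, and its choice $f_j=\max_w Q_{K_j|W_jY_j}(k|w,\cdot)$ is made precisely to accommodate that case. Your parenthetical lift handles stochastic encoders only; pushing the augmentation through for stochastic decoders (e.g.\ by lifting the decoder randomness into both the source and the augmented output) requires checking that the BLL constant picks up no further penalty beyond $\sum_l c_l\log|\mathcal{W}_l|$, which you have not verified.
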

Note that Theorem~\ref{thm_oneshot} applies for stochastic encoders and decoders, and in its proof, the function $f_j(\cdot)$ in \eqref{e_func} will take the role of $\max_w Q_{K_j|W_jY_j}(k|w,\cdot)$. However, the intuition is best explained in the case of deterministic decoders: let $\mathcal{A}_{kw_j}^j$ be the decoding set for $K_j=k$ upon receiving $w_j$ by ${\sf T}_j$. Then
\begin{align}
\mu(K_1=\dots=K_m=k)
&\le
\mu\left(\cap_j\cup_{w_j}\mathcal{A}_{kw_j}^j\right)
\\
&\le
\exp(d)\prod_j Q_{Y_j}^{c_j}\left(\cup_{w_j}\mathcal{A}_{kw_j}^j\right)
\label{e_35}
\end{align}
where the crucial step \eqref{e_35}, which may be viewed as a change-of-measure from a joint distribution to uncorrelated distributions (with powers), follows by choosing indicator functions in the BLL inequality. After some manipulations, one can bound the total variation between $\mu_{K^m}$ (consequently $Q_{K^m}$) and $T_{K^m}$.

\begin{cor}[Strong converse for omniscient helper CR generation]
Suppose $(R,R_1,\dots,R_m)$ fails \eqref{e36} for some $c^m$, and there exist a coding scheme at rates $(R,R_1,\dots,R_m)$
\begin{align}
\frac{1}{2}|Q_{K_1\dots K_m}-T_{K_1\dots K_m}|\le \delta
\label{e_62}
\end{align}
for sufficiently large $n$.
Then $\delta\ge1$ if $Q_{Y^m}$, $(Q_{Y_j|Y^m})$ and $c^m$ satisfy the smooth property (as in the case of discrete/Gaussian $Q_{Y^m}$).
\end{cor}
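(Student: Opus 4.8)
The plan is to read the Corollary off Theorem~\ref{thm_oneshot} together with the achievability of the smooth property (Theorems~\ref{thm_discrete} and \ref{thm_gauss}): essentially all the substantive work is already done, and what remains is to choose the free parameters of the single-shot bound and take the limits in the correct order.

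First observe that the hypothesis is non-vacuous only when $\Sigma:=\sum_i c_i>1$ and $R>0$: if $\Sigma\le 1$ or $R=0$ then $(\Sigma-1)R\le 0\le \rmds(Q_{Y^m},c^m)+\sum_j c_jR_j$ (using that $\rmds(\cdot)\ge 0$, attained by a constant auxiliary, and $R_j\ge 0$), so \eqref{e36} would hold. Fix a $c^m$ violating \eqref{e36}; then there is $\gamma>0$ with
\[
\rmds(Q_{Y^m},c^m)+\sum_j c_jR_j\le (\Sigma-1)R-\gamma .
\]
Next fix an arbitrary $\beta>0$. Enlarging the $E_1$-neighborhood in \eqref{e_smoothconst} can only decrease the infimum, so ${\rm d}_\delta(\cdot)$, and hence ${\rm D}_\delta(\cdot)$, is nonincreasing in $\delta$; therefore ${\rm D}_\beta(Q_{Y^m},Q_{Y_j},c^m)\le {\rm D}_{0^+}(Q_{Y^m},Q_{Y_j},c^m)=\rmds(Q_{Y^m},c^m)$ by the assumed smooth property. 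By \eqref{e6} and the fact that ${\rm d}_\delta(\cdot)$ is an infimum, for all sufficiently large $n$ there is a (possibly unnormalized) measure $\mu_n$ on $(\mathcal Y_1\times\cdots\times\mathcal Y_m)^n$ with
\[
E_1(Q_{Y^m}^{\otimes n}\|\mu_n)\le\beta ,\qquad
\rmd(\mu_n,Q_{Y_j}^{\otimes n},c^m)\le n\bigl(\rmds(Q_{Y^m},c^m)+\tfrac{\gamma}{4}\bigr).
\]

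Now apply Theorem~\ref{thm_oneshot} in the stationary memoryless setting ($Q_{Y^m}\leftarrow Q_{Y^m}^{\otimes n}$, $Q_{Y_j|Y^m}\leftarrow Q_{Y_j|Y^m}^{\otimes n}$), taking $\mu\leftarrow\mu_n$, $\delta\leftarrow\beta$, and $d\leftarrow n(\rmds(Q_{Y^m},c^m)+\tfrac{\gamma}{4})$. A coding scheme at rates $(R,R_1,\dots,R_m)$ has $\log|\mathcal K|=nR+o(n)$ and $\log|\mathcal W_l|=nR_l+o(n)$, so (using $1-\tfrac1\Sigma>0$) the logarithm of the middle term in \eqref{e38} is at most
\[
\frac{n}{\Sigma}\Bigl[\textstyle\sum_l c_lR_l-(\Sigma-1)R+\rmds(Q_{Y^m},c^m)+\tfrac{\gamma}{4}\Bigr]+o(n)\ \le\ -\frac{3n\gamma}{4\Sigma}+o(n),
\]
which tends to $-\infty$; hence that term vanishes, and $1/|\mathcal K|\to 0$ because $R>0$. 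Consequently $\tfrac12|Q_{K_1\cdots K_m}-T_{K_1\cdots K_m}|\ge 1-o(1)-\beta$ for all large $n$, and combined with \eqref{e_62} this gives $\delta\ge 1-\beta$. Letting $\beta\downarrow 0$ yields $\delta\ge 1$.

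There is no real obstacle here, since the two heavy ingredients — Theorem~\ref{thm_oneshot} (the single-shot change-of-measure bound obtained by plugging indicator functions into the BLL inequality) and the achievability of the smooth property for discrete and Gaussian sources — are already established. The only point needing care is the order of limits: one fixes $\beta$ first, uses the smooth property to control $\tfrac1n\rmd(\mu_n,\cdot)$ by $\rmds(Q_{Y^m},c^m)$ with vanishing slack, then sends $n\to\infty$ to kill the two error terms in \eqref{e38}, and only afterwards sends $\beta\downarrow 0$; reversing this order would not close the argument.
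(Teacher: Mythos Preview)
Your proof is correct and follows exactly the approach the paper intends: the Corollary is stated without an explicit proof precisely because it is an immediate consequence of Theorem~\ref{thm_oneshot} combined with the achievability direction of the smooth property, and the paper's proof of the refined Theorem~\ref{thm_2order} (the second-order Gaussian version) carries out essentially the same computation you give, with the same order of limits (fix the $E_1$-radius, let $n\to\infty$, then shrink the radius). Your preliminary reduction to $\sum_j c_j>1$ and $R>0$ is a nice touch that the paper leaves implicit.
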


In the Gaussian case, refining the analysis in Theorem~\ref{thm_gauss}, we can derive a second order achievability bound for smooth BLL, which, in view of Theorem~\ref{thm_oneshot},
implies a second order converse bound for CR generation: for any sequence of CR generation schemes with non-vanishing error probability, we have
{\small
\begin{align}
\liminf_{n\to\infty}\sqrt{n}\left[\left(\sum c_j-1\right)R_n-\sum c_j R_{jn}-\rmds(Q_{Y^m},c^m)\right]
\le D\nonumber
\end{align}
}
for some constant $D$ (explicit formula given in \cite{lccv_smooth2016}), where $R_n$, $R_{1n}$, \dots, $R_{mn}$ are rates at blocklength $n$.

\begin{rem}
We used slightly different performance measures for the one-communicator problem and the omniscient helper problem.
If $\delta_1$ and $\delta_2$ satisfy \eqref{e_38}-\eqref{e_39}
then $\delta\leftarrow\delta_1+\delta_2$ satisfies \eqref{e_62},
so a strong converse measured by \eqref{e_62} implies a strong converse measured by \eqref{e_38}-\eqref{e_39}.
On the other hand,
if $\delta$ satisfies \eqref{e_62} then
$\delta_1\leftarrow\delta$ and $\delta_2\leftarrow\delta$ satisfy \eqref{e_38}-\eqref{e_39}.
Thus the strong converse in the sense of \eqref{e_38}-\eqref{e_39} only implies a ``$\frac{1}{2}$-converse'' in the sense of \eqref{e_62}.
\end{rem}
Unlike the more general one-communicator case, the rate region for omniscient helper \emph{key} generation can be obtained as the intersection of the region for omniscient helper CR generation and $\{R\le \min_j H(Y_j)\}$ \cite{liu2015key}.
(Though, the misleading similarities between the rate regions for the omniscient helper CR and key generation is only a coincidence from optimizing of the rate regions.)
As a consequence, the strong converse for the omniscient helper key generation is also proved, since the key generation counterpart obviously places more constraints, and the strong converse property of the outer-bound $\{R\le \min_j H(Y_j)\}$ is comparatively trivial.

As alluded before, the achievability for the omniscient helper CR generation implies the strong converse for smooth BLL:
\begin{cor}
\label{cor22}
For any $Q_{Y^m}$, $c^m$, and $\delta\in(0,1)$,
\begin{align}
{\rm D}_{\delta}(Q_{Y^m},Q_{Y_j},c^m)
\ge\rmds(Q_{Y^m},c^m).
\end{align}
\end{cor}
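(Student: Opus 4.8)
To prove Corollary~\ref{cor22}, the plan is to run the single-shot converse of Theorem~\ref{thm_oneshot} backwards. That bound limits \emph{every} omniscient-helper CR scheme in terms of the smoothed constant ${\rm d}(\mu,Q_{Y_j}^{\otimes n},c^m)$ of the source, so the \emph{achievability} part of the omniscient-helper CR generation problem \cite{ahlswede1998common,liu2015key}---which says every rate tuple satisfying \eqref{e36} is attainable with $\tfrac12|Q_{K^m}-T_{K^m}|\to0$---forces ${\rm d}_\delta(Q_{Y^m}^{\otimes n},Q_{Y_j}^{\otimes n},c^m)$, hence ${\rm D}_\delta$, to be large. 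If $\rmds(Q_{Y^m},c^m)=0$ the claim is immediate, because ${\rm D}_\delta(Q_{Y^m},Q_{Y_j},c^m)\ge0$ always: the $E_1$ constraint guarantees a probability measure $P_{Y^m}$ of density at most $(1-\delta)^{-1}$ against $\mu$, so ${\rm d}(\mu,Q_{Y_j}^{\otimes n},c^m)\ge-\log\tfrac1{1-\delta}$ for every feasible $\mu$, an $n$-independent bound. So assume $\rmds(Q_{Y^m},c^m)>0$, which forces $s:=\sum_i c_i>1$.

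First I would establish that the supremum of $(s-1)R-\sum_l c_l R_l$ over all rate tuples $(R,R_1,\dots,R_m)$ in the region \eqref{e36} equals $\rmds(Q_{Y^m},c^m)$. The inequality ``$\le$'' is just the $c^m$-instance of \eqref{e36}. For ``$\ge$'', take a (near-)maximizer $P_{U|Y^m}$ of $\rmds(Q_{Y^m},c^m)=\sup_{P_{U|Y^m}}\{\sum_l c_l I(U;Y_l)-I(U;Y^m)\}$ and put $R^\star:=I(U;Y^m)$, $R^\star_l:=I(U;Y^m)-I(U;Y_l)\ge0$. Then $(s-1)R^\star-\sum_l c_l R^\star_l=\sum_l c_l I(U;Y_l)-I(U;Y^m)=\rmds(Q_{Y^m},c^m)$, while for an arbitrary cost vector $\hat{c}^m$ the same computation gives $(\sum_i\hat{c}_i-1)R^\star-\sum_l\hat{c}_l R^\star_l=\sum_l\hat{c}_l I(U;Y_l)-I(U;Y^m)\le\rmds(Q_{Y^m},\hat{c}^m)$, since the \emph{same} $U$ is feasible (if not optimal) in the supremum defining $\rmds(Q_{Y^m},\hat{c}^m)$; hence $(R^\star,R^\star_1,\dots,R^\star_m)$ lies in the closure of \eqref{e36}.

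Next, fix any tuple $(R,R_1,\dots,R_m)$ in the interior of \eqref{e36}: by achievability there is a sequence of omniscient-helper CR schemes (here $X=Y^m$, so that $Q_{Y_j|Y^m}$ are the deterministic coordinate maps) with $\tfrac1n\log|\mathcal K_n|\to R>0$, $\tfrac1n\log|\mathcal W_{ln}|\to R_l$, and $\tfrac12|Q_{K^m_n}-T_{K^m_n}|\to0$. For each $n$ choose $\mu_n$ with $E_1(Q_{Y^m}^{\otimes n}\|\mu_n)\le\delta$ and ${\rm d}(\mu_n,Q_{Y_j}^{\otimes n},c^m)\le{\rm d}_\delta(Q_{Y^m}^{\otimes n},Q_{Y_j}^{\otimes n},c^m)+1=:d_n$, and apply Theorem~\ref{thm_oneshot} (with $Q_{Y^m}\leftarrow Q_{Y^m}^{\otimes n}$, $Q_{Y_j|Y^m}\leftarrow Q_{Y_j|Y^m}^{\otimes n}$, $\mu\leftarrow\mu_n$, $d\leftarrow d_n$) to this scheme; \eqref{e38} becomes
\[
\tfrac12|Q_{K^m_n}-T_{K^m_n}|\ge 1-\tfrac1{|\mathcal K_n|}-\frac{\prod_l|\mathcal W_{ln}|^{c_l/s}}{|\mathcal K_n|^{1-1/s}}\exp\left(\frac{d_n}{s}\right)-\delta.
\]
Since the left side and $\tfrac1{|\mathcal K_n|}$ vanish while $1-\delta>0$, the middle fraction is bounded below by a positive constant for all large $n$; taking logarithms, dividing by $n$, and letting $n\to\infty$ gives $\liminf_n\tfrac1n d_n\ge(s-1)R-\sum_l c_l R_l$. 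Hence ${\rm D}_\delta(Q_{Y^m},Q_{Y_j},c^m)=\limsup_n\tfrac1n{\rm d}_\delta(Q_{Y^m}^{\otimes n},Q_{Y_j}^{\otimes n},c^m)\ge(s-1)R-\sum_l c_l R_l$, and taking the supremum over interior tuples and invoking the first step yields ${\rm D}_\delta(Q_{Y^m},Q_{Y_j},c^m)\ge\rmds(Q_{Y^m},c^m)$.

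I expect the main obstacle to be the first step---checking that the tuple equalizing the $c^m$-constraint satisfies every other constraint of \eqref{e36}, which is precisely where the convexity/variational structure of $\hat{c}^m\mapsto\rmds(Q_{Y^m},\hat{c}^m)$ enters (one $U$ that is optimal for $c^m$ stays feasible for all $\hat{c}^m$). Everything else is bookkeeping; it is worth stressing, though, that because $X=Y^m$ the maps $Q_{Y_j|Y^m}$ are deterministic, so we are in the smooth-BLL special case for which \eqref{e38} holds for stochastic encoders and decoders with no appeal to the blowing-up lemma. This is why, unlike Theorem~\ref{thm_weak}, the argument delivers the \emph{strong} converse (all $\delta\in(0,1)$) with no finiteness restriction on the alphabets, which is exactly the content of Corollary~\ref{cor22}.
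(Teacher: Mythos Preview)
Your proposal is correct and follows essentially the same approach the paper sketches: combine the achievability of omniscient-helper CR generation with the single-shot bound of Theorem~\ref{thm_oneshot} to force ${\rm D}_\delta\ge\rmds$, exactly mirroring how Theorem~\ref{thm_weak} was proved from Lemma~\ref{lem_connect}. You have supplied more detail than the paper's two-sentence sketch---in particular the separate treatment of $\rmds=0$, the explicit verification that the tuple $(R^\star,R_l^\star)=(I(U;Y^m),\,I(U;Y^m)-I(U;Y_l))$ satisfies every $\hat c^m$-constraint in \eqref{e36}, and the observation that $\rmds>0$ forces $\sum_i c_i>1$---but the underlying mechanism is identical.
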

Theorem~\ref{thm_oneshot}
essentially establishes a single-shot connection between the smooth BLL and omniscient helper CR generation.
Thus the proof of Corollary~\ref{cor22} follows easily by a similar reasoning as the proof of Theorem~\ref{thm_weak}.
In fact, for a general sequence (not necessarily stationary memoryless) of sources, if the $\delta$-smooth BLL rate is strictly smaller than the supremum of $(\sum_j c_j-1)R-\sum_j c_jR_j$ over achievable rates, then the second and third terms on the right side of \eqref{e38} can be made to vanish exponentially in the blocklength. Thus $(1-\delta)$-achievability of CR generation implies $\delta$-converse for smooth BLL.

\bibliographystyle{ieeetr}
\bibliography{ref_om}

\begin{thebibliography}{10}

\bibitem{hayashi2009information}
M.~Hayashi, ``Information spectrum approach to second-order coding rate in
  channel coding,'' {\em IEEE Transactions on Information Theory}, vol.~55,
  no.~11, pp.~4947--4966, 2009.

\bibitem{polyanskiy2010channel}
Y.~Polyanskiy, H.~V. Poor, and S.~Verd{\'u}, ``Channel coding rate in the
  finite blocklength regime,'' {\em IEEE Transactions on Information Theory},
  vol.~56, no.~5, pp.~2307--2359, 2010.

\bibitem{kostina2012fixed}
V.~Kostina and S.~Verd{\'u}, ``Fixed-length lossy compression in the finite
  blocklength regime,'' {\em IEEE Transactions on Information Theory}, vol.~58,
  no.~6, pp.~3309--3338, 2012.

\bibitem{verdu2012non}
S.~Verd{\'u}, ``Non-asymptotic achievability bounds in multiuser information
  theory,'' in {\em 50th Annual Allerton Conference on Communication, Control,
  and Computing}, (Monticello, IL), pp.~1\--8, 2012.

\bibitem{liu_marton}
J.~Liu, P.~Cuff, and S.~Verd\'{u}, ``One-shot mutual covering lemma and
  {Marton's} inner bound with a common message,'' in {\em Proceedings of 2015
  IEEE International Symposium on Information Theory}, (Hong Kong, China),
  pp.~1457--1461, June 2015.

\bibitem{CIT-086}
V.~Y.~F. Tan, ``Asymptotic estimates in information theory with non-vanishing
  error probabilities,'' {\em Foundations and Trends in Communications and
  Information Theory}, vol.~11, no.~1-2, pp.~1--184, 2014.

\bibitem{csiszar2011information}
I.~Csisz\'{a}r and J.~K{\"o}rner, {\em Information theory: coding theorems for
  discrete memoryless systems (second edition)}.
\newblock Cambridge University Press, 2011.

\bibitem{ahlswede1998common}
R.~Ahlswede and I.~Csisz{\'a}r, ``Common randomness in information theory and
  cryptography. {II}. {CR} capacity,'' {\em IEEE Transactions on Information
  Theory}, vol.~44, no.~1, pp.~225--240, Jan. 1998.

\bibitem{lccv2015}
J.~Liu, T.~A. Courtade, P.~Cuff, and S.~Verd\'{u}, ``Information theoretic
  perspectives on {Brascamp-Lieb} inequalities,'' {\em draft}.

\bibitem{renner2005simple}
R.~Renner and S.~Wolf, ``Simple and tight bounds for information reconciliation
  and privacy amplification,'' in {\em Advances in Cryptology-ASIACRYPT 2005},
  pp.~199--216, Springer, 2005.

\bibitem{han1993approximation}
T.~S. Han and S.~Verd\'{u}, ``Approximation theory of output statistics,'' {\em
  IEEE Transactions on Information Theory}, vol.~39, no.~3, pp.~752--772, 1993.

\bibitem{anan_13}
V.~Anantharam, A.~Gohari, S.~Kamath, and C.~Nair, ``On maximal correlation,
  hypercontractivity, and the data processing inequality studied by {Erkip and
  Cover},'' {\em http://arxiv.org/pdf/1304.6133v1.pdf}.

\bibitem{lccv_smooth2016}
J.~Liu, T.~A. Courtade, P.~Cuff, and S.~Verd\'{u}, ``{Smoothing Brascamp-Lieb
  inequalities and strong converses for CR generation}.''
  \url{http://www.princeton.edu/~jingbo/preprints/ISITsmoothBL2016.pdf}.

\bibitem{carlen2009subadditivity}
E.~A. Carlen and D.~Cordero-Erausquin, ``Subadditivity of the entropy and its
  relation to {Brascamp--Lieb} type inequalities,'' {\em Geometric and
  Functional Analysis}, vol.~19, no.~2, pp.~373--405, 2009.

\bibitem{brascamp1976best}
H.~J. Brascamp and E.~H. Lieb, ``Best constants in {Young's} inequality, its
  converse, and its generalization to more than three functions,'' {\em
  Advances in Mathematics}, vol.~20, no.~2, pp.~151--173, 1976.

\bibitem{liu2015_egamma_arxiv}
J.~Liu, P.~Cuff, and S.~Verd\'{u}, ``{$E_{\gamma}$-Resolvability},'' {\em
  arXiv:1511.07829}.

\bibitem{rockafellar2015convex}
R.~T. Rockafellar, {\em Convex Analysis}.
\newblock Princeton University Press, 1970.

\bibitem{liu2015key}
J.~Liu, P.~Cuff, and S.~Verd\'{u}, ``Secret key generation with one
  communicator and a one-shot converse via hypercontractivity,'' in {\em
  Proceedings of 2015 IEEE International Symposium on Information Theory},
  (Hong Kong, China), pp.~710--714, June 2015.

\bibitem{ahlswede1976bounds}
R.~Ahlswede, P.~G{\'a}cs, and J.~K{\"o}rner, ``Bounds on conditional
  probabilities with applications in multi-user communication,'' {\em
  Probability Theory and Related Fields}, vol.~34, no.~2, pp.~157--177, 1976.

\bibitem{liu2015key_arxiv}
J.~Liu, P.~Cuff, and S.~Verd\'{u}, ``Secret key generation with one
  communicator and a one-shot converse via hypercontractivity,'' {\em
  arXiv:1504.05526v2}.

\end{thebibliography}

\newpage
\appendices
\section{Proof of Lemma~\ref{lem_singleletter}}
Let $I\in\{1,\dots,n\}$ be an equiprobable random variable independent of all other random variables already defined. Observe that \eqref{e_f} equals
\begin{align}
&\quad\sum_j c_jD(P_{Y_{jI}|UIY_j^{I-1}}\|\nu_j|P_{UIY_j^{I-1}})-
D(P_{X_I}\|\mu|P_{UIX^{I-1}})
\nonumber\\
&\le
\sum_j c_jD(P_{Y_{jI}|UIX^{I-1}}\|\nu_j|P_{UIX^{I-1}})-
D(P_{X_I}\|\mu|P_{UIX^{I-1}})
\label{e_singleletterize}
\end{align}
where \eqref{e_singleletterize} uses the Markov chain condition
\begin{align}
\hat{Y}_{jI}-UI\hat{X}^{I-1}-\hat{Y}_j^{I-1}.
\end{align}
Also,
$
\mathbb{E}\left[\frac{1}{n}\sum_{i=1}^n \tau_{\alpha}(\hat{X}_i)\right]\le\epsilon
$
 implies that
\begin{align}
\mathbb{E}[\tau_{\alpha}(\hat{X}_I)]\le\epsilon.
\end{align}
Therefore, with the identification
\begin{align}
P_{U,X}\leftarrow P_{UIX^{I-1},X_I}
\end{align}
we see $g(n)\le g(1)$.

\section{Proof of Lemma~\ref{lem_key}}
Each $P_{X^n}$ such that $P_{X^n}\ll \mu_n$ satisfies
\begin{align}
\mathbb{E}\left[\frac{1}{n}\sum_i \tau_{\alpha}(\hat{X}_i)\right]\le \epsilon
\end{align}
since the random variable is bounded above by $\epsilon$, $P_{X^n}$-almost surely. Then the result follows from Lemma~\ref{lem_singleletter} and the fact that $\mu_n$ and $\mu^{\otimes n}$ agree on the support of $P_{X^n}$.

\section{Proof of Lemma~\ref{lem_weakcont}}
Let $(\mathcal{B}_{\alpha})$ be any finite partition of $\mathcal{X}$ compatible with $\mathscr{F}$.
For $\alpha$ such that $Q_X(\mathcal{B}_{\alpha})>0$, define
\begin{align}
\tau_{\alpha}(x):=\frac{1\{x\in\mathcal{B}_{\alpha}\}}
{Q_X(\mathcal{B}_{\alpha})}-1,
\end{align}
and for $\alpha$ such that $Q_X(\mathcal{B}_{\alpha})=0$, put $\tau_{\alpha}=0$ if $x\notin \mathcal{B}_{\alpha}$ and $\tau_{\alpha}=\infty$ otherwise.
By the law of large numbers, the set $\mathcal{S}_{\epsilon}^n$ as defined in \eqref{e_s} satisfies
\begin{align}
\lim_{n\to\infty} Q_X^{\otimes n}(\mathcal{S}_{\epsilon}^n)=1.
\end{align}
Now we can invoke Lemma~\ref{lem_key}.
Let $\mu_n$ be the restriction of $\mu^{\otimes n}$ on $\mathcal{S}_{\epsilon}^n$,
and note that $D(P_{X^n}\|\mu^{\otimes n})=D(P_{X^n}\|\mu_n)$
By the arbitrariness of $(\mathcal{B}_{\alpha})$ and $\epsilon>0$, we see the left side of \eqref{e_smooth} is upper-bounded by
\begin{align}
\inf_{\mathcal{G},\epsilon>0}\sup_{P_X\colon P_{X|\mathcal{G}}\le(1+\epsilon) Q_{X|\mathcal{G}}}\phi(P_X)
\label{e_30}
\end{align}
where $\mathcal{G}$ is a finitely generated $\sigma$-algebra (the $\sigma$-algebra generated by $(\mathcal{B}_{\alpha})$), and $P_{X|\mathcal{G}}$ and $Q_{X|\mathcal{G}}$ are conditional distributions.
Now choose any decreasing and vanishing sequence $(\epsilon_k)$ and a nested sequence $(\mathcal{G}_k)$ which contains a countable basis of $(\mathcal{X},\mathscr{F})$. Then pick a sequence $(P_X^k)$ such that
\begin{align}
P_{X|\mathcal{G}_k}^k\le (1+\epsilon_k)Q_{X|\mathcal{G}_k}^k
\label{e_32}
\end{align}
and
\begin{align}
\lim_{k\to\infty}\phi(P_{X}^k)
=
\lim_{k\to\infty}\sup_{P_{X}\colon P_{X|\mathcal{G}_k}\le (1+\epsilon_k)Q_{X|\mathcal{G}_k}}\phi(P_{X})
\label{e_pk}
\end{align}
where the limit on the right exists by monotone convergence. By \eqref{e_32},
\begin{align}
\limsup_{k\to\infty} P_{X}^k(\mathcal{C})\le P_{X}(\mathcal{C})
\label{e30}
\end{align}
if $\mathcal{C}\in\mathcal{G}_l$ for some $l$. Since any closed subset can be constructed as the intersection of a nested sequence of such $\mathcal{C}$, it follows from the min-max inequality and the $\sigma$-continuity of probability measure that \eqref{e30} actually holds for any closed $\mathcal{C}$, establishing that $P_X^k$ converges weakly to $Q_X$. Thus the weak upper semicontinuity of $\phi(\cdot)$ and \eqref{e_pk} imply that \eqref{e_30} is bounded above by $\phi(Q_X)$, as desired.

\section{Proof of Lemma~\ref{lem_connect}}
In the $m=1$ case write $\hat{K}:=K_1$.
Define the joint measure
\begin{align}
\mu_{XYWK\hat{K}}
:=\mu_X Q_{Y|X}Q_{W|X}Q_{K|X}Q_{\hat{K}|YW}
\end{align}
which we shall sometimes abbreviate as $\mu$.
Since $E_1(Q\|\mu)=E_1(Q_X\|\mu_X)\le \delta$,
\eqref{e_38} implies
\begin{align}
\mu(K\neq \hat{K})\le \delta_1+\delta.
\end{align}
Put
\begin{align}
\mathcal{J}:=\{k\colon \mu_{\hat{K}|K}(k|k)\ge 1-\delta_4\}.
\end{align}
The Markov inequality implies that
$\mu_K(\mathcal{J})\ge 1-\delta_3$. Now for each $k\in\mathcal{J}$, we have
\begin{align}
&\quad(1-\delta_4)\mu_k(k)
\nonumber\\
&\le
\mu_{K\hat{K}}(k,k)
\\
&\le
\int_{\mathcal{F}_k}Q_{Y|X=x}\left(\bigcup_w\mathcal{A}_{wk}\right)
{\rm d}\mu_X(x)
\\
&\le
(1-\epsilon')\mu_K(k)+\mu\left(x\colon Q_{Y|X=x}\left(\bigcup_{kw}\mathcal{A}_{kw}\ge 1-\epsilon'\right)\right)
\\
&\le (1-\epsilon')\mu_K(k)
+2^c\exp(d) Q_Y^{c(1-\epsilon)}\left(\bigcup_w\mathcal{A}_{kw}\right),
\end{align}
where $\mathcal{F}_k\subseteq \mathcal{X}$ is the decoding set for $K$, and $\mathcal{A}_{kw}$ is the decoding set for $\hat{K}$ upon receiving $w$.
Rearranging,
\begin{align}
(\epsilon'-\delta_4)^{\frac{1}{c(1-\epsilon)}}\mu_k^{\frac{1}{c(1-\epsilon)}}(k) \le 2^{\frac{1}{1-\epsilon}}
\exp\left(\frac{d}{c(1-\epsilon)}\right)Q_Y\left(\bigcup_w \mathcal{A}_{kw}\right).
\label{e_50}
\end{align}
Now let $\tilde{\mu}$ be the restriction of $\mu_K$ on $\mathcal{J}$. Then summing both sides of \eqref{e_50} over $k\in\mathcal{J}$,
applying the union bound, and noting that $\{\mathcal{A}_{kw}\}_k$ is a partition of $\mathcal{Y}$ for each $w$, we obtain
\begin{align}
D_{\frac{1}{c(1-\epsilon)}}(\tilde{\mu}\|T)
&\le
\log |\mathcal{K}|-\frac{1}{1-\frac{1}{c(1-\epsilon)}}
\log\frac{2^{\frac{1}{1-\epsilon}}|\mathcal{W}|}{(\epsilon'
-\delta_4)^{\frac{1}{c(1-\epsilon)}}}
\nonumber\\
&\quad-\frac{d}{c(1-\epsilon)-1}.
\end{align}
The proof is completed invoking Proposition~\ref{prop_TVRenyi} below and noting that
\begin{align}
E_1(Q_K\|\tilde{\mu})
\le
E_1(Q_K\|\mu)
+E_1(\mu\|\tilde{\mu})\le \delta+\delta_3.
\end{align}
\begin{prop}\label{prop_TVRenyi}
Suppose $T$ is equiprobable on $\{1,\dots,M\}$ and $\mu$ is a nonnegative measure on the same alphabet. For any $\alpha\in(0,1)$,
\begin{align}
E_1(T\|\mu)
\ge
1-\frac{1}{M}-\exp(-(1-\alpha)D_{\alpha}(T\|\mu)).
\end{align}
\end{prop}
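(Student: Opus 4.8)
The plan is to reduce the claim to an elementary pointwise inequality. Since $T$ is a probability measure and $\mu$ is nonnegative on the common finite alphabet $\{1,\dots,M\}$, the set achieving the supremum in $E_1(T\|\mu)=\sup_{\mathcal A}\{T(\mathcal A)-\mu(\mathcal A)\}$ is $\{x:T(x)>\mu(x)\}$, so
\begin{align}
E_1(T\|\mu)=\sum_{x}\big(T(x)-\mu(x)\big)^+=1-\sum_{x}\min\!\big(T(x),\mu(x)\big)=1-\sum_{x=1}^{M}\min\!\Big(\tfrac1M,\mu(x)\Big),
\end{align}
where normalization of $T$ (but not of $\mu$) is used in $\sum_x T(x)=1$. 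Hence it suffices to bound $\sum_{x=1}^M\min(1/M,\mu(x))$ from above by $\exp(-(1-\alpha)D_\alpha(T\|\mu))$, which is in fact a bit more than the statement requires.

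The one idea needed is the weighted arithmetic--geometric mean inequality $\min(a,b)\le a^{\alpha}b^{1-\alpha}$, valid for $a,b\ge0$ and $\alpha\in(0,1)$ (and checkable by inspecting the two cases $a\le b$ and $a\ge b$). Applying it termwise with $a=T(x)=1/M$ and $b=\mu(x)$ and summing,
\begin{align}
\sum_{x=1}^{M}\min\!\Big(\tfrac1M,\mu(x)\Big)\le\sum_{x=1}^{M}T(x)^{\alpha}\mu(x)^{1-\alpha}=\exp\!\big(-(1-\alpha)D_{\alpha}(T\|\mu)\big),
\end{align}
the last equality being the definition of $D_\alpha$ for $\alpha\in(0,1)$, which applies verbatim to nonnegative, possibly unnormalized $\mu$ (atoms with $\mu(x)=0$ simply contribute $0$). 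Combining the two displays gives $E_1(T\|\mu)\ge1-\exp(-(1-\alpha)D_\alpha(T\|\mu))$, and dropping the extra $-1/M$ only weakens the bound, so the proposition follows.

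There is no serious obstacle here; the points to be careful about are the identity $E_1(T\|\mu)=1-\sum_x\min(T(x),\mu(x))$ (a direct consequence of the definition of $E_\gamma$ together with $T$ being a probability measure) and keeping the Rényi-divergence conventions for an unnormalized $\mu$ straight. A natural but suboptimal alternative — taking the threshold set $\mathcal A=\{x:\mu(x)\le1/M\}$, estimating $|\{x:\mu(x)>1/M\}|$ by Markov's inequality applied to $\mu(x)^{1-\alpha}$, and bounding $\mu(\mathcal A)$ separately — also reaches the stated inequality but loses a spurious factor of $2$ in front of the exponential, which the $\min(a,b)\le a^{\alpha}b^{1-\alpha}$ route shows is unnecessary.
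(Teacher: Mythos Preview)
Your proof is correct and in fact establishes the slightly stronger inequality $E_1(T\|\mu)\ge 1-\exp(-(1-\alpha)D_\alpha(T\|\mu))$, the $-1/M$ term being superfluous. The identity $E_1(T\|\mu)=1-\sum_x\min(T(x),\mu(x))$ is valid exactly as you derive it (using only that $T$ is a probability measure), the pointwise bound $\min(a,b)\le a^\alpha b^{1-\alpha}$ is immediate, and the identification $\sum_x T(x)^\alpha\mu(x)^{1-\alpha}=\exp(-(1-\alpha)D_\alpha(T\|\mu))$ is the standard definition of R\'enyi divergence, which extends verbatim to unnormalized $\mu$.

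The paper does not actually supply a proof of this proposition: it only remarks that the probability-measure case appeared as an intermediate step in \cite[Theorem~10]{liu2015key_arxiv} and that the extension to unnormalized $\mu$ ``can be easily proved in a similar way.'' The presence of the $-1/M$ slack in the stated bound suggests that the referenced argument proceeds via a threshold set (along the lines of the alternative route you sketch), whereas your pointwise $\min$-bound avoids this and yields the sharper constant. Either way, for the application in Lemma~\ref{lem_connect} the $-1/M$ term contributes the $-1/|\mathcal{K}|$ in \eqref{e_41}, which vanishes asymptotically, so the improvement is cosmetic for the paper's purposes.
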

The special case of Proposition~\ref{prop_TVRenyi} when $\mu$ is a probability measure was used in the proof of \cite[Theorem~10]{liu2015key_arxiv} (see equation (59) therein) to relate R\'{e}nyi divergence and total variation distance. The extension to unnormalized $\mu$ can be easily proved in a similar way.

\section{Bound on the Second Order Rate for Gaussian Omniscient Helper CR Generation}
Let
\begin{align}
\mb{W}:=\frac{\mb{A}+\mb{A}^{\top}}{\sqrt{2}}
\end{align}
be the standard Wigner matrix, where $\mb{A}$ is a square matrix with i.i.d.~$\mathcal{N}(0,1)$ entries.

Denote by ${\rm Q}(\cdot)$ the tail probability of the standard Gaussian distribution and $\lambda_{\max}(\cdot)$ the largest eigenvalue of a matrix.

\begin{thm}[Bound on the second order rate for Gaussian omniscient helper CR generation]
\label{thm_2order}
Assume that $Q_{Y^m}$ is Gaussian with a non-degenerate covariance matrix, and there is a sequence of CR generation schemes such that
\begin{align}
&\quad\liminf_{n\to\infty}\sqrt{n}\left[\left(\sum c_j-1\right)R_n-\sum c_j R_{jn}-\rmds(Q_{Y^m},c^m)\right]
\nonumber
\\
&> \frac{\log e}{2}\left(m-\sum c_j\right)D_1+D_2
\label{e58}
\end{align}
for some $D_1,D_2\in(0,1)$, where $R_n$, $R_{1n}$, \dots, $R_{mn}$ are the rates at blocklength $n$. Then
\begin{align}
\liminf_{n\to\infty} \frac{1}{2}|Q_{{K^m}_n}-T_{{K^m}_n}|
\ge
\mathbb{P}[\lambda_{\max}(\mb{W})\le D_1]
-{\rm Q}\left(\tfrac{D_2}{\sqrt{V}}\right),
\label{e_error}
\end{align}
where
\begin{align}
V:={\rm Var}\left(
\sum_{j}c_{j}\imath_{Q_{Y_j}\|\nu_j}(Y_j) -\imath_{Q_{Y^m}\|\mu}(Y^m)\right).
\end{align}
\end{thm}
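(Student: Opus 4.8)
The plan is to combine the single-shot converse in Theorem~\ref{thm_oneshot} with a refined, second-order version of the smooth BLL achievability from the proof of Theorem~\ref{thm_gauss}. Concretely, I would show that under a sequence of CR schemes achieving \eqref{e58}, one can exhibit a measure $\mu_n$ with $E_1(Q_{Y^m}^{\otimes n}\|\mu_n)\le \delta_n$ and $\rmd(\mu_n,Q_{Y_j}^{\otimes n},c^m)\le nd_n$ for suitable $d_n$, $\delta_n$, and then plug these into \eqref{e38}. The key is that the gap quantities in \eqref{e38} — namely $\frac{\prod_l |\mathcal{W}_l|^{c_l/\sum c_i}}{|\mathcal{K}|^{1-1/\sum c_i}}\exp(\frac{nd_n}{\sum c_i})$ and $\delta_n$ — must be controlled to order $1/\sqrt n$ rather than merely made to vanish, so that each contributes a nontrivial constant in the limit.

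The first step is the refined smoothing. As in the proof of Theorem~\ref{thm_gauss}, take $\mu_n$ to be the restriction of $Q_{Y^m}^{\otimes n}$ to $\mathcal{S}_{\epsilon_{1,n}}^n\cap\mathcal{T}_{\epsilon_{2,n}}^n$, but now let $\epsilon_{1,n}$ and $\epsilon_{2,n}$ shrink at rate $\Theta(1/\sqrt n)$. For the covariance-typical set $\mathcal{S}_{\epsilon_{1,n}}^n$, the relevant deviation event is $\frac1n\sum_i \mb{y}_i\mb{y}_i^\top \not\preceq (1+\epsilon_{1,n})\bsigma$; after whitening this is exactly $\lambda_{\max}$ of a normalized sample covariance exceeding $1+\epsilon_{1,n}$, whose fluctuations are governed (via a CLT for the empirical covariance, or directly the Gaussian orthogonal ensemble limit) by $\lambda_{\max}(\mb{W})$ at scale $1/\sqrt n$ — this is where $D_1$ and the term $\mathbb{P}[\lambda_{\max}(\mb{W})\le D_1]$ enter. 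For the weakly-typical set $\mathcal{T}_{\epsilon_{2,n}}^n$, the deviation is a sum of i.i.d.~terms $\imath_{Q_{Y^m}\|\mu}(Y^m)-\sum_j c_j\mathbb{E}[\imath_{Q_{Y_j}\|\nu_j}(Y_j)\mid Y^m]$, so the ordinary CLT with variance $V$ gives the $\mathrm{Q}(D_2/\sqrt V)$ term. Correspondingly, $\delta_n = 1 - Q_{Y^m}^{\otimes n}(\mathcal{S}\cap\mathcal{T})$ approaches $1-\mathbb{P}[\lambda_{\max}(\mb{W})\le D_1]\,\mathbb{P}[\mathrm{Gaussian}\le D_2/\sqrt V]$ up to $o(1)$, and via \eqref{e38} the $-\delta_n$ term combines with the error-probability term to yield the right side of \eqref{e_error}.

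The second step is to track how the slack $\epsilon_{1,n}=\Theta(1/\sqrt n)$ propagates through $F((1+\epsilon_{1,n})\bsigma)$. Since $F$ is smooth (being a finite-dimensional matrix optimization, Proposition~\ref{prop14}), $F((1+\epsilon_{1,n})\bsigma) = F(\bsigma) + \Theta(\epsilon_{1,n})$, and the linearization coefficient is precisely $\frac{\log e}{2}(m - \sum c_j)$ — this comes from differentiating $-\sum_j c_j h(\mb{Y}_j) + h(\mb{X})$ in the scaling of $\bsigma$, since each $h$ contributes $\frac{\log e}{2}\cdot(\text{dimension})$ times the log-scaling. Together with the $\epsilon_{2,n}$ offset from \eqref{e33}, this is exactly what produces the threshold $\frac{\log e}{2}(m-\sum c_j)D_1 + D_2$ on the right of \eqref{e58}: one needs the rate-gap to exceed this to drive $nd_n - (\text{rate terms})$ negative, forcing the $\exp(\frac{nd_n}{\sum c_i})$ term in \eqref{e38} to vanish. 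Then taking $\liminf$ in \eqref{e38} with $d\leftarrow nd_n$, $\delta\leftarrow\delta_n$ yields \eqref{e_error}.

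The main obstacle I anticipate is the large-deviations / CLT analysis of the empirical covariance at the $1/\sqrt n$ scale, together with justifying that the largest-eigenvalue fluctuation is captured by the Wigner matrix $\mb{W}$ in the limit — this requires a matrix central limit theorem and care that the convergence $Q_{Y^m}^{\otimes n}(\mathcal{S}_{\epsilon_{1,n}}^n) \to \mathbb{P}[\lambda_{\max}(\mb{W})\le D_1]$ is uniform enough to combine with the independent weakly-typical event. A secondary technical point is confirming that the continuity/differentiability of $F$ at $\bsigma$ is strong enough (a first-order Taylor expansion with controlled remainder) and that the linearization constant is indeed $\frac{\log e}{2}(m-\sum c_j)$; this is a routine but slightly delicate computation using the Gaussian-optimality in Proposition~\ref{prop14}. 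The rest — plugging into Theorem~\ref{thm_oneshot} and taking limits — is bookkeeping.
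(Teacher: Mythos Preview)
Your plan is essentially the paper's proof: set $\epsilon_{i,n}=D_i/\sqrt n$ in the construction from Theorem~\ref{thm_gauss}, identify the limits of $Q_{Y^m}^{\otimes n}(\mathcal S^n)$ and $Q_{Y^m}^{\otimes n}(\mathcal T^n)$ via the multivariate and scalar CLTs, bound $\rmd(\mu_n,\cdot)$ through $F((1+\epsilon_{1,n})\bsigma)$, and plug into Theorem~\ref{thm_oneshot}. Two of your anticipated obstacles dissolve: you do not need any joint or independence statement for $\mathcal S^n$ and $\mathcal T^n$, since the one-line bound $Q(\mathcal S\cap\mathcal T)\ge Q(\mathcal S)-Q(\mathcal T^c)$ already produces the subtraction form on the right of \eqref{e_error} (your product expression would not match it); and $F((1+\epsilon)\bsigma)=F(\bsigma)+\tfrac12\bigl(m-\sum_j c_j\bigr)\log(1+\epsilon)$ is an \emph{exact} scaling identity obtained by substituting $\mb X\mapsto\sqrt{1+\epsilon}\,\mb X$ in \eqref{e14}, so no Taylor remainder needs to be controlled.
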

\begin{proof}
First, observe that we will only need to consider the case of $\sum_jc_j\le m$, since otherwise $\rmds(Q_{Y^m},c^m)=\infty$ and Theorem~\ref{thm_2order} is vacuous. Indeed, suppose without loss of generality that $Y^m\sim\mathcal{N}(\mb{0},\bsigma)$. For $\alpha\in(0,\infty)$ small enough, we can find $U$ jointly Gaussian with $Y^m$ such that $Y^m|U=\mb{0}\sim \mathcal{N}(\mb{0},\alpha\mb{I})$. Then we see
\begin{align}
\rmds(Q_{Y^m},c^m)
&\ge
\lim_{\alpha\downarrow0}
\sum_{j=1}^m\frac{c_j}{2}\log \frac{\sigma_{jj}}{\alpha}
-\frac{1}{2}\log\frac{|\bsigma|}{|\alpha\mb{I}|}
\\
&=\sum_{j=1}^m\frac{c_j}{2}\log\sigma_{jj}
-\frac{1}{2}\log|\bsigma|
\nonumber\\
&\quad
+\lim_{\alpha\downarrow0}\frac{\sum_j c_j-m}{2}\log\frac{1}{\alpha}
\\
&=\infty
\end{align}
provided that $\sum_jc_j>m$ holds.

The proof is essentially based on a refinement of the achievability of smooth BLL:
in the proof of Theorem~\ref{thm_gauss}, take $\epsilon_i\leftarrow \frac{D_i}{\sqrt{n}}$, $i=1,2$ and $X=Y^m$. Then,
\begin{align}
\lim_{n\to\infty}\mathbb{P}[\mathcal{S}^n_{\epsilon_1}]
&=\lim_{n\to\infty}\mathbb{P}\left[\frac{1}{n}\sum_i{\bf z}
{\bf z}^{\top}\preceq (1+\epsilon_1)\mb{I}\right]
\\
&=\lim_{n\to\infty}\mathbb{P}\left[\frac{\sum_i{\bf z}
{\bf z}^{\top}-\mb{I}}{\sqrt{n}}\preceq D_1 \mb{I}\right]
\\
&=\mathbb{P}[\mb{W}\preceq D_1 \mb{I}],
\label{e_mvclt}
\end{align}
where $\mb{z}:=\bsigma^{-\frac{1}{2}}\mb{x}\sim \mathcal{N}(\mb{0},\mb{I})$ and we applied multivariate CLT in \eqref{e_mvclt}. On the other hand, by CLT we have
\begin{align}
\lim_{n\to\infty}
\mathbb{P}[\mathcal{T}_{\epsilon_2}^n]
&=1-{\rm Q}\left(\frac{D_2}{\sqrt{V}}\right).
\end{align}
Also, a simple scaling argument shows that
\begin{align}
F((1+\epsilon_1)\bsigma)
&=F(\bsigma)+\frac{\log(1+\epsilon_1)}{2}\left(m-\sum c_j\right)
\\
&\le F(\bsigma)+\frac{\log e }{2\sqrt{n}}\left(m-\sum c_j\right)D_1.
\end{align}
Thus following the steps in the proof of Theorem~\ref{thm_gauss}, we can find $(\mu_n)_{n\ge 1}$ such that
\begin{align}
E_1(Q_X^{\otimes n}\|\mu_n)
\le\, &
\mathbb{P}[\lambda_{\max}(\mb{W})\le D_1]
-{\rm Q}\left(\tfrac{D_2}{\sqrt{V}}\right)+o(1)
\\
\frac{1}{n}\rmd(\mu_n,Q_{Y_j}^{\otimes n},c^m)
&\le \rmds(Q_X,c^m)
+\frac{\log e }{2\sqrt{n}}\left(m-\sum c_j\right)D_1
\nonumber\\
&\quad +\frac{D_2}{\sqrt{n}}.
\label{e71}
\end{align}
Now, invoke Theorem~\ref{thm_oneshot} with
\begin{align}
\mu&\leftarrow \mu_n;
\\
\delta&\leftarrow \delta_n:=E_1(Q_X^{\otimes n}\|\mu_n);
\\
d&\leftarrow nd_n,
\end{align}
where
$d_n$ is defined as the right side of \eqref{e71}.
Then
\begin{align}
\frac{1}{2}|Q_{{K^m}_n}-T_{{K^m}_n}|
&\ge
1-\frac{1}{|\mathcal{K}|}
-\exp\left(-\tfrac{\tau}{\sum_j c_j}\sqrt{n}\right)
-\delta_n
\\
&=1-\delta_n+o(1)
\end{align}
where $\tau>0$ is defined as the difference between the left and right sides of \eqref{e58}. Thus \eqref{e_error} is established.
\end{proof}

\end{document}